\spnewtheorem{observation}{Observation}{\bfseries}{\itshape}
\begin{document}

\title{New Algorithms for Minimizing the Weighted Number of Tardy Jobs On a Single Machine\thanks{This research was supported by Grant No. 2016049 from the United States-Israel Binational Science Foundation. The first author is also supported by the People Programme (Marie Curie Actions) of the European Union’s Seventh Framework Programme (FP7/2007-2013) under REA grant agreement number 631163.11, and by the Israel Science Foundation (grant no. 551145/14).}}

\author{%
Danny Hermelin \inst{1}
\and%
Shlomo Karhi \inst{2}
\and%
Mike Pinedo \inst{3}
\and%
Dvir Shabtay \inst{1}}

\institute{
Department of Industrial Engineering and Management, Ben-Gurion University, Israel \\
\email{hermelin@bgu.ac.il, dvirs@bgu.ac.il}
\and
Department of Management, Bar-Ilan University, Israel\\
\email{shlomo.karhi@biu.ac.il}
\and
Stern School of Business, New York University, USA.\\
\email{mpinedo@stern.nyu.edu}
}

\date{}

\maketitle

\begin{abstract}

In this paper we study the classical single machine scheduling problem where the objective is to minimize the weighted number of tardy jobs. Our analysis focuses on the case where one or more of three natural parameters is either constant or is taken as a parameter in the sense of parameterized complexity. These three parameters are the number of different due dates, processing times, and weights in our set of input jobs. We show that the problem belongs to the class of fixed parameter tractable (FPT) problems when combining any two of these three parameters. We also show that the problem is polynomial-time solvable when the latter two parameters are constant, complementing Karp's result who showed that the problem is NP-hard already for a single due date.

\end{abstract}

\section{Introduction}

In this paper, we analyze the tractability of the NP-hard problem of minimizing the weighted number of tardy jobs on a single machine when one or more of three natural parameters is either constant or is taken as a parameter in the sense of parameterized complexity. This problem is formalized as follows: We are given a set of $n$ jobs $\mathcal{J}=\{J_{1},...,J_{n}\}$ to be scheduled non preemptively on a single machine. Associated with each job $J_{j} \in \mathcal{J}$, are three non-negative integers $p_{j}$, $d_{j}$ and $w_{j}$, which represent the processing time, due date and weight of $J_j$, respectively. A schedule (or a solution) to our problem is defined by an ordering, $\pi:\mathcal{F} \to \{1,\ldots,n\}$, of the jobs in $\mathcal{J}$ representing the sequence in which the jobs are to be processed on the single machine. For a given schedule $\pi$ of $\mathcal{J}$, 
the set of \emph{tardy jobs} is the set of all jobs $J_j$ whose completion time $\sum_{\pi(J_i) \leq \pi(J_j)} p_i$ is greater than their due date $d_j$ (the complementary set is commonly referred to as the set of \emph{early jobs}). Our objective is to compute a schedule that minimizes the weighted number of tardy jobs. Following the standard three field notations introduced by Graham \emph{\textit{et al.} }\cite{Graham79}, we refer to our problem as the $1\left\vert {}\right\vert \Sigma w_{j}U_{j}$ problem, where $U_j$ can be thought of as a binary indicator variable that is equal to 1 iff $J_j$ is tardy in a corresponding schedule.

The $1\left\vert {}\right\vert \Sigma w_{j}U_{j}$ problem is a fundamental problem in the field of combinatorial optimization in general and in scheduling theory in particular. The problem was studied already in the late 60s~\cite{lawler1969functional}, and perhaps even before that. Karp placed the problem in the pantheon of combinatorial optimization problems by listing it in his landmark paper from 1972~\cite{Kar72}. There it is shown that the problem is NP-hard even if all due dates are equal, giving the first NP-hardness proof for any scheduling problem. The reduction presented in~\cite{Kar72} is from the 0-1~Knapsack problem, and in fact, it is not difficult to show that this reduction works in both ways. 
Thus, the $1\left\vert {}\right\vert \Sigma w_{j}U_{j}$ problem is a generalization of the 0-1~Knapsack problem where jobs may have arbitrary due dates.

Besides being the first scheduling problem which was shown to be NP-hard, the $1\left\vert {}\right\vert \Sigma w_{j}U_{j}$ problem also has a focal role in the history of algorithm development. The classical algorithm of Lawler and Moore~\cite{lawler1969functional} is one of the earliest and most prominent examples of dynamic programming, and of a pseudo-polynomial time algorithm. Sahni~\cite{sahni1976algorithms} used the $1\left\vert {}\right\vert \Sigma w_{j}U_{j}$ problem as one of the three first examples to illustrate the important concept of fully polynomial time approximation schemes (FPTAS's) in the area of scheduling. To that effect, several generalizations of the $1\left\vert {}\right\vert \Sigma w_{j}U_{j}$ problem have been studied in the literature, testing the limits to which these techniques can be applied~\cite{Adamu2014}.

Despite all this, there are not many papers that provide exact algorithms for solving the $1\left\vert {}\right\vert \Sigma w_{j}U_{j}$ problem. As mentioned above, Lawler and Moore~\cite{lawler1969functional} and Sahni~\cite{sahni1976algorithms} provided dynamic programming procedures to solve the problem in pseudo-polynomial time, showing that the problem is only weakly NP-hard. In fact, their algorithms run in polynomial-time if either all weights, all processing times, or all due dates are integer values that are bounded by a polynomial function in $n$. Exact algorithms based on a Branch-and-Bound procedure have been presented by Villarreal and Bulfin~\cite{villarreal1983scheduling}, Tang~\cite{tang1990new} and M'Hallah and Bulfin~\cite{m2003minimizing}. Moreover, the problem is known to be polynomial-time solvable in a few special cases: Moore~\cite{Moore1968} provided an $O(n\log n)$ time algorithm for solving the unit weight $1\left\vert {}\right\vert \Sigma U_{j}$ problem, and Peha~\cite{Peha1995} presented an $O(n\log n)$ time algorithm for the case where all jobs have equal processing time (see also~\cite{BruckerKravchenko2006}).

In the next subsection, we present the basic concepts of parameterized complexity theory (for more details we refer the reader to Cygan \emph{et al.}~\cite{Cygan2015}, Downey and Fellows~\cite{DF99}, Flum and Grohe~\cite{FG98}, and Niedermeier~\cite{N06}). This will enable us to clearly present our research goals, in the subsection that follows. We conclude the introduction section by citing some related work, and by providing a roadmap for the rest of the paper including details about the techniques that will be used.

\subsection{Basic concepts in parameterized complexity theory}

The main objective in parameterized complexity theory is to analyze the tractability of NP-hard problems with respect to other natural
problem parameters, and not only with respect to their input length. For this, problem instances are ordered pairs of the form $(x,k) \in \{0,1\}^* \times \mathbb{N}$, where $x$ is a binary string that denotes the actual input, and $k$ is a numerical value that quantifies the parameter (or set of parameters). The following definition is the central notion of parameterized complexity theory.

\begin{definition}
A problem $\Pi \subseteq \{0,1\}^* \times \mathbb{N}$ is fixed-parameter tractable and belongs to the complexity class (FPT) if there is an algorithm that can determine whether any instance $(x,k) \in \{0,1\}^* \times \mathbb{N}$ is in $\Pi$ in $f(k) \cdot |x|^{O(1)}$ time, where $f$ is some computable function that solely depends on $k$.
\end{definition}

The main issue here is to differentiate between those problems that require $f(k) \cdot |x|^{O(1)}$ time, and those that require $|x|^{f(k)}$ time. To exemplify this distinction, consider two classical graph theoretic problems Independent Set and Vertex Cover (see \emph{e.g.}~\cite{GJ79} for formal definitions), both parameterized by the cardinality of the solution. While there are algorithms for Vertex Cover that run in $O(2^k n)$ time, and even better~\cite{N06}, we do not know any algorithm for Independent Set that runs in $n^{o(k)}$ time. In fact, parameterized complexity theory provides convincing evidence that no such algorithm exists. The reader should note that due to this, we are able to solve Vertex Cover instances with much larger solution sizes in comparison to Independent Set instances; in particular, instances with solution size $O(\lg n)$ can be solved in polynomial time.

\subsection{Parameterized tractability of scheduling problems}

Our paper can be considered as another attempt to facilitate the tools of parameterized complexity into the area of scheduling. Despite the rich amounts of interesting NP-hard problems the latter area has, and the many successes of the former in designing tractable algorithms for NP-hard problems, there has been disappointingly little research in combining the two worlds.

Two early papers by Bodlaender and Fellows~\cite{BodlaenderFellows1995}, and Fellows and McCartin~\cite{FellowsandMcCartin2003} studied scheduling problems with precedence constrains. Both these papers obtain only hardness results for their respective problems under consideration. The first to provide positive results for scheduling problems in the perspective of parameterized complexity are Mnich and Wiese \cite{MnichandWiese2013} who showed that various classical scheduling problems on parallel machines, and on a single machine with rejection are fixed-parameter tractable with respect to several natural parameters. Van Bevern \emph{et al.}~\cite{journals/corr/BevernNS15} study a fixed interval scheduling problem where jobs are to be scheduled on a set of identical machines working in parallel, and show FPT results with respect to several combinations of interesting structural parameters of the problem. Hermelin \emph{et al.}~\cite{hermelin2015scheduling} study several single machine two agent scheduling problem, when the number of jobs belonging to one of the agents is taken as a parameter.

Additional papers that study the parameterized tractability of scheduling problems are those by van Bevern \emph{et al.}~\cite{vanBevern2016a,vanBevern2016b}, Cieliebak \emph{et al.}~\cite{Cieliebak2004}, and Knop \emph{et al.}~\cite{DBLP:journals/corr/KnopK16}.

\subsection{Our contribution}

In this paper we present new exact algorithms for the $1\left\vert {}\right\vert \Sigma w_{j}U_{j}$ problem. Since the problem is already known to be pseudo-polynomial time solvable~\cite{lawler1969functional,sahni1976algorithms}, we focus on the case where the input can contain arbitrary large (\emph{i.e.}, exponential) integer values. Moreover, our interest will focus on the following three parameters in our input instance:

\begin{itemize}
\item $nu_{d} =$ The number of different due dates.
\item $nu_{p} =$ The number of different processing times.
\item $nu_{w} =$ The number of different weights.
\end{itemize}

An example scenario where the first parameter is relevant is when delivery costs are high and thus products are batched to only few shipments. In such a case, each job may be assigned a due date by the marketing department according to one of the planned delivery dates. An example for the second parameter is when the number of job types that a manufacturer produces is limited, though each job might have different importance and a different due date. The last parameter corresponds to the case where customers are batched into few subsets according to their importance, and all customers within the same subset are similarly compensated in case of tardiness.

We consider restricted instances of the $1\left\vert {}\right\vert \Sigma w_{j}U_{j}$ problem where one or more of the above three parameters is relatively small in comparison to the total input length. It is important to note that any of the parameters may be small even if the actual due dates, processing times, or weights are large numbers. For instance, all jobs can have the same processing time of $2^n$. The first result in this context is the NP-hardness result mentioned above by Karp, who showed the the 0-1~Knapsack problem reduces to the special case of $1\left\vert {}\right\vert \Sigma w_{j}U_{j}$ where all due dates are equal.
\begin{theorem}[\cite{Kar72}]
\label{theorem:d}
The $1\left\vert {}\right\vert \Sigma w_{j}U_{j}$ problem is \textnormal{NP}-hard even for the case where $nu_{d}=1$.
\end{theorem}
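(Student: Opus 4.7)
The plan is to give a polynomial-time reduction from a classical weakly NP-hard problem (Subset Sum, or equivalently 0-1 Knapsack) to the restriction of $1||\Sigma w_j U_j$ in which every job shares a common due date. Since the excerpt already notes that Karp's original reduction is from 0-1 Knapsack and works in both directions, I would mirror that idea but present it in the cleaner Subset Sum formulation for readability.

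Concretely, given a Subset Sum instance consisting of positive integers $a_1,\ldots,a_n$ and a target $T \le \sum_i a_i$, I would construct the following $1||\Sigma w_j U_j$ instance: create $n$ jobs $J_1,\ldots,J_n$ and for each $j$ set $p_j = a_j$, $w_j = a_j$, and $d_j = T$. Set the decision threshold to $K = \bigl(\sum_i a_i\bigr) - T$, and ask whether there is a schedule with $\sum_j w_j U_j \le K$. The construction clearly runs in polynomial time, has $nu_d = 1$ by design, and is a valid instance of our problem.

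Next I would prove correctness of the reduction by exhibiting a bijection between feasible subsets and sets of early jobs. In any schedule, the set $E$ of early jobs is precisely the prefix of jobs completed by time $d_j = T$, so $\sum_{j \in E} p_j \le T$; conversely any subset $E$ with $\sum_{j \in E} p_j \le T$ can be realized as the early set by scheduling $E$ first (in any order) and the remaining jobs afterwards. Minimizing $\sum_j w_j U_j$ is therefore equivalent to maximizing $\sum_{j \in E} w_j$ subject to $\sum_{j \in E} p_j \le T$, which under $p_j = w_j = a_j$ becomes: find $E \subseteq \{1,\ldots,n\}$ maximizing $\sum_{j \in E} a_j$ subject to $\sum_{j \in E} a_j \le T$. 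The optimum is $\le T$, and equals $T$ iff the original Subset Sum instance has a solution. Hence the answer to the scheduling instance meets the threshold $K$ iff the Subset Sum instance is a \textsc{yes}-instance, completing the reduction.

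There is no real technical obstacle here, since the reduction is essentially a direct encoding; the only point that requires a bit of care is the easy observation that any capacity-feasible subset can be realized as the set of early jobs of some schedule, which follows immediately because all jobs share the same due date and so the order within the early block is irrelevant.
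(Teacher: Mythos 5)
Your reduction is correct and is essentially Karp's original argument, which is exactly what the paper invokes by citation for this theorem: Karp reduces Knapsack/Subset Sum to the single-due-date sequencing problem by setting $p_j = w_j = a_j$, $d_j = T$, and threshold $\bigl(\sum_i a_i\bigr) - T$, just as you do. The correctness argument via the early-job prefix and the equivalence to maximizing $\sum_{j \in E} a_j$ subject to $\sum_{j \in E} a_j \le T$ is sound, so nothing is missing.
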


Theorem~\ref{theorem:d} rules out the possibility, under the P$\neq$NP hypothesis, that $1\left\vert {}\right\vert \Sigma w_{j}U_{j}$ is \emph{fixed-parameter tractable} with respect to $nu_{d}$. That is, we cannot hope for an algorithm solving the problem with a running-time of $f(nu_{d}) \cdot n^{O(1)}$ for any computable function $f()$. Our first result complements this fact. We show that a fixed-parameter algorithm for $1\left\vert {}\right\vert \Sigma w_{j}U_{j}$ is obtainable when combining $nu_{d}$ with any of the two remaining parameters $nu_{p}$ or $nu_{w}$. Moreover, we show that the problem is also fixed-parameter tractable when parameterized by $nu_{p}+nu_{w}$, positively resolving all cases of parameter combinations.
\begin{theorem}
\label{theorem:fpt}
The $1\left\vert {}\right\vert \Sigma w_{j}U_{j}$ problem is fixed-parameter tractable when parameterized by either $nu_{d}+nu_{p}$, $nu_{d}+nu_{w}$, or $nu_{p}+nu_{w}$.
\end{theorem}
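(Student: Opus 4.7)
The plan is to reduce each of the three parameter combinations to a mixed integer linear program (MILP) whose number of integer variables is bounded by a function of the two parameters, and then to invoke Lenstra's algorithm for MILP (which runs in FPT time in the number of integer variables). The common starting point is the standard observation that in any optimal schedule, the early jobs can be processed first in EDD order and the tardy jobs placed last in any order; hence it suffices to select a maximum-weight subset $E \subseteq \mathcal{J}$ that is EDD-feasible, i.e.\ every prefix of $E$ in EDD order has cumulative processing time at most the due date of its last job.

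\medskip\noindent\textbf{Case $nu_d+nu_p$.} Partition $\mathcal{J}$ into at most $nu_d\cdot nu_p$ buckets indexed by $(d,p)$, and sort each bucket by non-increasing weight so that, in any optimum, the early jobs of a bucket form a prefix of its order. With integer variable $x_{i,j}\in\{0,\dots,n_{i,j}\}$ for bucket $(d^{(i)},p^{(j)})$, EDD-feasibility collapses to the $nu_d$ cumulative inequalities $\sum_{i'\le i}\sum_j p^{(j)} x_{i',j}\le d^{(i)}$. The objective $\sum_{i,j} f_{i,j}(x_{i,j})$, where $f_{i,j}(t)$ is the sum of the $t$ largest weights in bucket $(i,j)$, is separable, concave, and piecewise-linear; I linearize it by adding one continuous variable $y_{i,j}$ and the at most $n_{i,j}$ supporting-line inequalities describing $f_{i,j}$, yielding an MILP with $nu_d\cdot nu_p$ integer variables and polynomially many continuous variables and constraints.

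\medskip\noindent\textbf{Case $nu_d+nu_w$.} The symmetric construction: partition by $(d,w)$ and sort each bucket by non-decreasing processing time. Now the objective $\sum_{i,j} w^{(j)} x_{i,j}$ is already linear, while each cumulative inequality becomes $\sum_{i'\le i,j} g_{i',j}(x_{i',j})\le d^{(i)}$, whose left-hand side is convex piecewise-linear ($g_{i',j}(t)$ is the sum of the $t$ smallest processing times in bucket $(i',j)$). A convex PWL inequality is expressible as a conjunction of at most $n$ linear inequalities, so again we obtain an MILP with $nu_d\cdot nu_w$ integer variables solvable by Lenstra's algorithm.

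\medskip\noindent\textbf{Case $nu_p+nu_w$.} This is the hard case and the main obstacle of the proof, because $nu_d$ is unrestricted and we can no longer reduce feasibility to a bounded number of prefix inequalities. I would partition by $(p,w)$ into at most $nu_p\cdot nu_w$ buckets, sort each by non-increasing due date (so optimal early jobs again form a prefix), and let $x_{j,l}$ be the number of early jobs from bucket $(p^{(j)},w^{(l)})$; the objective $\sum_{j,l} w^{(l)}x_{j,l}$ is linear. For each distinct due date $d^{(k)}$, the number of selected jobs in bucket $(j,l)$ with due date $\le d^{(k)}$ equals $\max\bigl(0,\, x_{j,l}-\beta_{j,l,k}\bigr)$, where $\beta_{j,l,k}$ is the precomputable number of jobs in bucket $(j,l)$ having due date strictly greater than $d^{(k)}$. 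I would introduce continuous slacks $z_{j,l,k}\ge 0$ with $z_{j,l,k}\ge x_{j,l}-\beta_{j,l,k}$ and impose $\sum_{j,l} p^{(j)}z_{j,l,k}\le d^{(k)}$ for every $k$; since the objective does not involve $z$, optimization forces $z_{j,l,k}$ down to $\max(0,x_{j,l}-\beta_{j,l,k})$, so the MILP correctly models EDD-feasibility. This MILP has $nu_p\cdot nu_w$ integer variables, polynomially many continuous variables, and polynomially many constraints, so Lenstra's algorithm finishes. The step I expect to require the most care is verifying that this continuous-slack trick is equivalent to the (non-convex, due to the $\max$) original feasibility condition in the presence of the maximizing linear objective; if that equivalence turned out to fail, my fallback would be to enumerate a permutation of the $nu_p\cdot nu_w$ buckets giving the relative order of the critical due dates $\delta_{j,l}(x_{j,l})$ (an FPT overhead of $(nu_p\cdot nu_w)!$) and then solve the resulting fixed-order MILP.
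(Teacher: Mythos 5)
Your proposal is correct. For the parameters $nu_d+nu_p$ and $nu_d+nu_w$ it coincides with the paper's proof up to trivial reformulation: the paper partitions into the same buckets, proves the same prefix structure (tardy jobs are the least weighted within a $(d,p)$-class; early jobs are the shortest within a $(d,w)$-class), and linearizes the resulting separable convex piecewise-linear terms with exactly the supporting-line inequalities you describe, invoking Lenstra (or, as its primary tool, the Dadush \emph{et al.} algorithm for mixed-integer \emph{convex} programs, which lets it skip the linearization). For $nu_p+nu_w$ your route genuinely differs. The paper writes an ILP with one integer variable $x_{i,j}$ per (bucket, due date) pair --- so $O(nk)$ integer variables --- relaxes all the $x_{i,j}$ to be continuous while keeping only the $k$ counters $y_i$ integral, and then proves a rounding lemma showing that an optimal fractional solution can be converted, bucket by bucket, into an integral one of the same value (the rounding is precisely ``assign the early quota of each bucket to its latest due dates''). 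You instead bake that structural fact in from the start via the prefix lemma, which reduces the per-due-date load of a bucket to the convex function $\max(0,x_{j,l}-\beta_{j,l,k})$, and encode it with epigraph slacks. Your worry about whether the slack trick is sound is unfounded and the $(nu_p\cdot nu_w)!$ fallback is unnecessary: because each $\max$ term appears on the \emph{smaller} side of a $\le$ constraint and the coefficients $p^{(j)}$ are non-negative, any feasible $(x,z)$ satisfies $\sum_{j,l}p^{(j)}\max(0,x_{j,l}-\beta_{j,l,k})\le\sum_{j,l}p^{(j)}z_{j,l,k}\le d^{(k)}$, and conversely setting $z_{j,l,k}=\max(0,x_{j,l}-\beta_{j,l,k})$ witnesses feasibility; so the projection onto the $x$-variables is exactly the intended region, with no appeal to the objective at all. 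The trade-off is that your formulation avoids the rounding argument entirely at the cost of $O(n\cdot nu_p\cdot nu_w)$ continuous slack variables, whereas the paper's relax-and-round keeps the LP relaxation closer to the natural ILP; both yield $O(nu_p\cdot nu_w)$ integer variables and hence FPT running time via Lenstra.
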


We also complement Theorem~\ref{theorem:d} in another way. Obviously, the theorem rules out the possibility for a polynomial-time algorithm for the $1\left\vert {}\right\vert \Sigma w_{j}U_{j}$ problem with a constant number of due dates. Building upon the main idea of Moore~\cite{Moore1968}, we show that the problem is nevertheless solvable in time $O(n^{nu_{w}+1} \lg n)$ or $O(n^{nu_{p}+1} \lg n)$, showing that the problem is polynomial-time solvable in the case where either the number of different weights or processing times is constant.
\begin{theorem}
\label{theorem:poly}
The $1\left\vert {}\right\vert \Sigma w_{j}U_{j}$ problem is polynomial-time solvable when either $nu_{w}$ or $nu_{p}$ is constant.
\end{theorem}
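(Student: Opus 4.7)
The plan is to extend the classical Lawler--Moore style DP by adding one DP dimension per weight class (respectively per processing-time class), so that constancy of $nu_w$ (resp.\ $nu_p$) keeps the state space polynomial. I sketch the $nu_w$ case first. Sort the jobs in EDD order; by the standard exchange argument, some optimum schedule places its early set $S$ first in EDD order, so feasibility amounts to $\sum_{i \in S,\ i \leq j} p_i \leq d_j$ for every $j \in S$. Let $w^{(1)}, \ldots, w^{(k)}$ be the $k = nu_w$ distinct weights with classes $C_1, \ldots, C_k$, and let $c_j$ denote the class of job $j$. Define
\[
  f(j, x_1, \ldots, x_k) \;=\; \min_{S}\; \sum_{i \in S} p_i,
\]
where $S$ ranges over feasible early subsets of the first $j$ EDD-sorted jobs with $|S \cap C_i| = x_i$ for every $i$ (and $f = \infty$ if no such $S$ exists). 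This table has $O(n^{k+1})$ cells.

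The recurrence is
\[
  f(j, \mathbf{x}) \;=\; \min\!\bigl\{\, f(j-1, \mathbf{x}),\ f(j-1, \mathbf{x} - \mathbf{e}_{c_j}) + p_j \,\bigr\},
\]
where the second branch is taken only when the resulting value is at most $d_j$; this makes job $j$ early and guarantees that its completion time meets its deadline, while feasibility at earlier positions is inherited from $f(j-1, \cdot)$. Each cell is computed in $O(1)$, so the DP runs in $O(n^{k+1})$, and combined with the $O(n \log n)$ sort this gives the claimed $O(n^{nu_w+1}\log n)$ bound. The minimum weighted number of tardy jobs is recovered as $\sum_j w_j - \max_{\mathbf{x}:\, f(n,\mathbf{x}) < \infty} \sum_i x_i w^{(i)}$. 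For the $nu_p$ case, I partition jobs into the $nu_p$ processing-time classes, index the DP by the per-class count vector $\mathbf{y}$ of early jobs, and store the maximum total weight in place of the minimum total processing time; the feasibility test at step $j$ simplifies to $\sum_i y_i\, p^{(i)} \leq d_j$, which depends only on the profile.

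The subtlety that will need careful verification is that keeping only one DP value per (EDD prefix, count vector) pair is lossless. In the $nu_w$ case the key point is that EDD deadlines are non-decreasing and cumulative processing-time sums are monotone, so among two feasible partial schedules with the same profile, the one with the smaller current processing time dominates against every later deadline; hence storing the minimum total processing time per profile never discards an optimal completion. In the $nu_p$ case the cumulative processing time at each step is already a function of the profile $\mathbf{y}$ alone, so tracking the maximum weight per profile is immediately the right objective. I expect this dominance argument to be the main formal step; the rest is a direct induction on the recurrence.
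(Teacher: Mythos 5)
Your proposal is correct and follows essentially the same route as the paper: EDD-order the jobs, index a DP table by the prefix length and the vector of per-weight-class (resp.\ per-processing-time-class) counts of early jobs, store the minimum total processing time (resp.\ maximum total weight) of the early set for each state, and justify this compression by exactly the dominance lemma you state. The only difference is that your append-or-skip recurrence dispenses with the paper's heap-based step in which $J_j$ may \emph{replace} an already-early job of its own class; this simplification is sound, because whenever such a swap yields a feasible early set $S \ni J_j$ of profile $\mathbf{x}$, the set $S\setminus\{J_j\}$ already witnesses $f(j-1,\mathbf{x}-\mathbf{e}_{c_j})+p_j \le \sum_{i\in S}p_i \le d_j$, so your append branch attains a value at least as good.
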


\subsection{Roadmap and techniques}

The paper is organized as follows: The proof of Theorem~\ref{theorem:fpt} is split into three sections. In Section~\ref{section:dp} we provide a fixed-parameter algorithm with respect to parameter $nu_{d}+nu_{p}$. The algorithm is based on providing a mixed-integer convex programming (MICP) formulation to the problem with a fixed number of integer variables, and then applying Dadush \emph{et al.} algorithm for solving MICPs with a parameterized number of integer variables in FPT time~\cite{D11}. We then show, in Section~\ref{section:dw}, how our algorithm can be modified to handle parameter $nu_{d}+nu_{w}$. In Section~\ref{section:pw} we apply the Dadush \emph{et al.} algorithm in a different way to handle parameter $nu_{p}+nu_{w}$, by showing that a natural integer linear programming formulation of the $1\left\vert {}\right\vert \Sigma w_{j}U_{j}$ can be relaxed to an MILP with $O(nu_{p}+nu_{w})$ variables, and that a polynomial-time rounding procedure can lift solutions of this MILP back to solutions for our original problem. The proof of Theorem~\ref{theorem:poly} is given in Sections~\ref{section:w} and~\ref{section:p}. Section~\ref{section:w} provides a polynomial-time algorithm for constant values of $nu_{w}$ through a somewhat involved dynamic programming procedure. As the main ideas are similar for parameter $nu_{p}$, we only give a sketch of the algorithm for this parameter in Section~\ref{section:p}. 

\section{An FPT algorithm for parameter \boldmath{$nu_{d}+nu_{p}$}}
\label{section:dp}

In this section we prove the first part of Theorem~\ref{theorem:fpt}, and show that the $1\left\vert {}\right\vert \Sigma w_{j}U_{j}$ problem is fixed-parameter tractable with respect to parameter $k=nu_{d}+nu_{p}$. This is done by first providing an MICP formulation for the problem with $O(k)$ integer variables. The proof then follows directly from the result by Dadush \emph{et al}. ~\cite{D11} that show that the problem of solving an MICP is fixed-parameter tractable with respect to the number of integer variables.

Recall that $\mathcal{J}$ denotes our input set of jobs. We begin by partitioning $\mathcal{J}$ into $k$ classes, $\mathcal{S}_1,\ldots,\mathcal{S}_k$, such that all jobs in the same class have equal due dates and processing times. We slightly abuse notation and use $d_i$ and $p_i$ to denote the due date and processing time of all jobs in $\mathcal{S}_i$, $1 \leq i \leq k$. Moreover, we let $n_i=|\mathcal{S}_i|$ denote the number of jobs in each $\mathcal{S}_i$.

We begin with two key observations that will be useful for formulating our MICP. The first can be considered by now as folklore (see, \emph{e.g.}, \cite{Adamu2014}), while the second follows from an easy pairwise interchange argument.
\begin{lemma}
\label{lemma:dp1}%
There exist an optimal schedule where the early jobs are scheduled first in a non-decreasing order of their due-dates (i.e., according to the earliest due date (EDD) rule), while the tardy jobs are scheduled last in an arbitrary order.
\end{lemma}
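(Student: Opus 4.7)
The plan is to start from an arbitrary optimal schedule $\pi^*$ and apply two normalization steps, each of which preserves the weighted number of tardy jobs, ending up with a schedule of the claimed form. I will use nothing more than pairwise interchange arguments, so the proof should be short.

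For the first step, I would push all tardy jobs to the end of the schedule. The local move here is to identify a tardy job $J_t$ sitting immediately before an early job $J_e$ and swap them. Such a swap only shortens the completion time of $J_e$, so $J_e$ stays early, and only lengthens the completion time of $J_t$, which was already tardy and therefore stays tardy; the completion times of all other jobs are unaffected since the total processing time of the swapped pair is unchanged. Iterating this move until no such adjacent pair remains produces a schedule in which every early job precedes every tardy job, with the same objective value as $\pi^*$.

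For the second step, I would apply the classical EDD exchange argument within the prefix of early jobs. Take any adjacent inversion, i.e., two consecutive early jobs $J_a, J_b$ in that order with $d_a > d_b$, and swap them. A one-line calculation shows that after the swap $J_a$ completes at what used to be $J_b$'s completion time (hence at most $d_b < d_a$, so it remains early), while $J_b$ completes even earlier (and so remains early as well); and no other job's completion time is affected. Iterating removes all inversions and yields an EDD ordering of the early jobs. The tardy jobs now occupy a suffix of the schedule, and since they contribute only the sum of their weights to the objective, their internal order is immaterial.

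I do not foresee any real obstacle here: the lemma is standard folklore, and the only point that requires a moment of care is verifying, in the second step, that an EDD swap does not accidentally turn an early job tardy. This is precisely what the swap computation above rules out, so no additional argument is needed, and the two normalization steps compose cleanly to give the claim.
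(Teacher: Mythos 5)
Your proof is correct. Note that the paper itself gives no proof of this lemma---it is cited as folklore with a pointer to the literature---so there is no argument to compare against; your two-phase exchange argument (first pushing tardy jobs to a suffix via adjacent swaps that preserve every job's status, then removing adjacent EDD inversions within the early prefix while verifying no early job becomes tardy) is exactly the standard one, and both local moves are justified correctly.
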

\begin{lemma}
\label{lemma:dp2}%
Suppose $\pi$ is an optimal schedule which includes $y_{i}$ tardy jobs from each $\mathcal{S}_i$, $1 \leq i \leq k$. Then these tardy jobs are the $y_{i}$ least weighted jobs in $\mathcal{S}_i$.
\end{lemma}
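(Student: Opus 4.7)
The plan is to use the standard pairwise-interchange argument that the authors allude to. Suppose, for contradiction, that there is some class $\mathcal{S}_i$ in which the $y_i$ jobs tardy under $\pi$ are not the $y_i$ least-weighted jobs of $\mathcal{S}_i$. Then there must exist $J_a,J_b \in \mathcal{S}_i$ such that $J_a$ is early in $\pi$, $J_b$ is tardy in $\pi$, and $w_a < w_b$.

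The key step is to exchange the positions of $J_a$ and $J_b$ to obtain a new schedule $\pi'$. Because $J_a$ and $J_b$ belong to the same class, they have identical processing times $p_i$, so the sum of processing times occurring before any position other than the swapped positions is the same in $\pi$ and $\pi'$. Hence every job distinct from $J_a$ and $J_b$ keeps its completion time, and so keeps its early/tardy status. Moreover, in $\pi'$ the job $J_b$ inherits the former completion time of $J_a$, which was at most $d_i=d_b$, so $J_b$ becomes early; symmetrically, $J_a$ inherits the former completion time of $J_b$, which exceeded $d_i=d_a$, so $J_a$ becomes tardy.

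The tardy set of $\pi'$ is thus obtained from that of $\pi$ by replacing $J_b$ with $J_a$, which strictly reduces the weighted number of tardy jobs by $w_b-w_a>0$, contradicting the optimality of $\pi$. Consequently, within every class $\mathcal{S}_i$ the tardy jobs have weights no larger than those of the early jobs, meaning (after breaking weight ties arbitrarily among jobs in the same class) that the $y_i$ tardy jobs are the $y_i$ least-weighted jobs of $\mathcal{S}_i$. I do not anticipate any genuine obstacle: the only thing to check carefully is that the swap leaves the status of every other job unchanged, which is immediate from the equality of processing times inside a class.
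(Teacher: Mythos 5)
Your proof is correct and follows essentially the same pairwise-interchange argument as the paper: swap an early lighter job with a tardy heavier job in the same class, observe that equal processing times preserve all completion times, and derive a strict decrease of $w_b-w_a$ in the objective, contradicting optimality. No gaps.
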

\begin{proof}
By contradiction, consider an optimal schedule $\pi$ that does not follow the lemma statement, \emph{i.e.}, in $\pi$ there are two jobs $J_\ell, J_m \in \mathcal{S}_i$ with $w_\ell<w_m$, and $J_m$ is tardy while $J_\ell$ is not. Now, construct an alternative schedule $\pi'$ out of $\pi$ by interchanging the positions of jobs $J_\ell$ and $J_m$. Since both jobs belong to the same class $\mathcal{S}_i$, they share the same processing time and due date. Thus, $J_\ell$ is tardy in $\pi'$ while $J_m$ is not. The fact that the completion time of all other jobs remains unchanged, implies that $\Sigma w_{j}U_{j}(\pi')-\Sigma w_{j}U_{j}(\pi)=w_\ell-w_m<0$, contradicting our assumption that $\pi$ is an optimal schedule.
\end{proof}

Following Lemma \ref{lemma:dp1}, we may assume without loss of generality that the job sets $\mathcal{S}_i$ are indexed according to the EDD rule (\emph{i.e.}, that $d_{1}\leq d_{2}\leq \ldots\leq d_{k}$). Moreover, adhering to Lemma~\ref{lemma:dp2}, we order the jobs in each set $\mathcal{S}_i$ in non-decreasing order of weights. Thus, letting $J_{i,j}$ denote the $j$'th job in $\mathcal{S}_i$, we have $w_{i,1}\leq w_{i,2}\leq \ldots\leq w_{i,n_i}$, for all $i=1,\ldots,k$.

We are now ready to present our MICP formulation. Let $x_{i}$ and $y_{i}$ be two integer non-negative variables, respectively, representing the number of early jobs and tardy jobs in job set $\mathcal{S}_i$, for each $i=1,\ldots,k$. To make sure that the number of early and tardy jobs in each set is equal to
the total number of jobs in the set, we include the following constraint for each $i=1,\ldots,k$:
\begin{equation}
\label{eqn:dp1}%
x_{i}+y_{i}=n_{i}.
\end{equation}%
Moreover, to make sure that indeed it is feasible to have $x_i$ early jobs in $\mathcal{S}_i$, we add the following constraints as well for each $i=1,\ldots,k$:%
\begin{equation}
\label{eqn:dp2}
\sum_{j=1}^{i}p_{j}x_{j}\leq d_{i}.
\end{equation}%

Note that any feasible solution with respect to the $k$ constraints of type (\ref{eqn:dp1}) and the $k$ constraints of type (\ref{eqn:dp2}) corresponds to a feasible schedule where indeed $x_i$ and $y_i$ jobs are scheduled prior and after $d_i$ in each $\mathcal{S}_i$. Now, let $z_{i}$ be a new non-negative (and non-integer) variable representing the total weight of all tardy jobs in $\mathcal{S}_i$. By Lemma \ref{lemma:dp2} and the ordering of jobs within each set $\mathcal{S}_i$, we have that $z_{i}=\sum_{j=1}^{y_i}w_{i,j}$. Therefore, the objective of our MICP, corresponding to the total weight $Z$ of the set of tardy jobs is
\begin{equation}
\label{eqn:obj1}
Z=\sum_{i=1}^{k}z_{i}= \sum_{i=1}^{k} \sum_{j=1}^{y_i}w_{i,j}.
\end{equation}%

Accordingly, our problem can be solved by solving formulation $\Pi$ where we need to minimize the objective in (\ref{eqn:obj1}) subject to the set of linear constraints in (\ref{eqn:dp1}) and (\ref{eqn:dp2}).

\begin{lemma}
\label{lemma:dp3}
The objective in (\ref{eqn:obj1}) is a convex function.
\end{lemma}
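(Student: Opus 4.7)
The plan is to reduce convexity of $Z$ to convexity of each summand $z_i$ as a one-variable function of $y_i$, and then exploit the fact that the jobs within each class $\mathcal{S}_i$ are sorted by non-decreasing weight.

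First I would observe that $Z = \sum_{i=1}^{k} z_i(y_i)$ is a sum of functions, each depending on a single variable $y_i$. Since a sum of convex functions is convex, it suffices to prove that each $z_i(y_i) = \sum_{j=1}^{y_i} w_{i,j}$ is convex in $y_i$. Because $y_i$ ranges over integers in $\{0,1,\ldots,n_i\}$ but the MICP framework needs a function defined over a convex domain, I would take $z_i$ to be the piecewise linear interpolation of its values at the integer points (this is the standard convention, and it agrees with $z_i$ at every integer argument).

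The key computation is then the slope of $z_i$ on the interval $[y_i-1,y_i]$, which equals the forward difference $z_i(y_i)-z_i(y_i-1) = w_{i,y_i}$. By our pre-sort of the jobs in each class, we have $w_{i,1}\le w_{i,2}\le \cdots \le w_{i,n_i}$, so the successive slopes of the piecewise linear function $z_i$ are non-decreasing. A piecewise linear univariate function with non-decreasing slopes is convex, hence each $z_i$ is convex, and therefore so is the objective $Z$.

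The only subtlety (not really an obstacle) is justifying the use of the piecewise linear extension: one should note that any convex extension would do, and the piecewise linear one is the natural choice since it preserves the values at integer points exactly, which is all that matters for the feasible integer solutions of the MICP. Aside from this, the argument is essentially a single invocation of the fact that the discrete differences of $z_i$ are monotone non-decreasing by our sorting.
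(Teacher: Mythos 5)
Your proof is correct and follows essentially the same route as the paper: decompose $Z$ into the single-variable functions $z_i(y_i)$ and use the sorting $w_{i,1}\leq\cdots\leq w_{i,n_i}$; your observation that the forward differences $z_i(y_i)-z_i(y_i-1)=w_{i,y_i}$ are non-decreasing is equivalent to the paper's check that the second difference $z_i(y_i+1)+z_i(y_i-1)-2z_i(y_i)=w_{i,y_i+1}-w_{i,y_i}$ is non-negative. Your explicit remark about the piecewise-linear extension to a convex domain is a point the paper leaves implicit, and is a reasonable addition rather than a deviation.
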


\begin{proof}
As the sum of convex functions is a convex function, to prove that (\ref{eqn:obj1}) is indeed a convex function, it is enough to show that each $z_i=z_i(y_i)=\sum_{j=1}^{y_i}w_{i,j}$ function ($i=1,...,k$) is a convex function. To do so, it is enough to show that $f(y_i)=z_i(y_i+1)+z_i(y_i-1)-2z_i(y_i) \geq 0$. It is easy to show that $f(y_i)=w_{i,y_{i+1}}-w_{i,y_i}$, which is indeed a non-negative value due to the fact that we order the jobs in each $\mathcal{S}_i$ such that $w_{i,1}\leq w_{i,2}\leq \ldots\leq w_{i,n_i}$, for all $i=1,\ldots,k$.
\end{proof}

Since formulation $\Pi$ includes only $O(k)$ integer variables, a set of linear constraints and a convex objective function, we can use Dadush's \emph{et al.} algorithm for MICPs with a parameterized number of integer variables~\cite{D11}, to solve our problem in FPT time with respect to $k$. Thus, we complete the proof of the first part of Theorem~\ref{theorem:fpt}.

We next show that there is an equivalent mixed integer linear programming (MILP) formulation to formulation $\Pi$. This will enable us to use an alternative FPT algorithm for our problem by applying Lenstra's classical FPT algorithm for solving MILPs with a parameterized number of integer variables~\cite{Len83}.  Since the set of constraints in (\ref{eqn:dp1}) and (\ref{eqn:dp2}) are linear, we need only to deal with the non-linearity of the objective function. To overcome this difficulty we replace the objective function in (\ref{eqn:obj1}) by a linear objective function of minimizing $Z=\sum_{i=1}^{k}z_{i}$. Then, correspond to each set $\mathcal{S}_i$ ($i=1,...,k$), we include a set of $n_i$ linear constraints. These constraints enforce stricter and stricter lower bounds on the values of $z_i$, where the largest of these ensures that $z_i$ gets its intended value. Accordingly, for each $i=1,\ldots,k$, we add the following set of $n_i$ constraints:
\begin{equation}
\label{eqn:dp3}
z_{i}\geq (y_{i}-j+1)w_{i,j}+\sum_{\ell=1}^{j-1}w_{i,\ell}\text{ for all }j=1,\ldots,n_{i}.
\end{equation}
Consider now the problem $\Pi'$ where we wish to minimize the linear function $\sum_{i=1}^k z_i$ subject to the constraints of type (\ref{eqn:dp1}),(\ref{eqn:dp2}), and (\ref{eqn:dp3}). The following lemma shows that indeed an optimal solution for $\Pi'$ satisfies our intended meaning for each variable $z_{i}$, which further implies that $\Pi$ and $\Pi'$ are equivalent.

\begin{lemma}
\label{lemma:dp3}
In an optimal solution for $\Pi$ we have $z_{i}=\sum_{j=1}^{y_{i}}w_{i,j}$ for each $i=1,\ldots,k$.
\end{lemma}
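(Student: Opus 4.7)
The plan is to exploit the minimization structure of $\Pi'$: the variables $z_i$ appear only on the left-hand side of the lower-bound constraints~(\ref{eqn:dp3}) (they do not participate in (\ref{eqn:dp1}) or (\ref{eqn:dp2})) and enter the objective $\sum_i z_i$ with positive coefficient. Therefore, in any optimal solution, each $z_i$ is driven down to the tightest of its lower bounds, which gives
\[
z_i \;=\; \max_{1\leq j\leq n_i}\Bigl[(y_i - j + 1)\, w_{i,j} + \sum_{\ell=1}^{j-1} w_{i,\ell}\Bigr].
\]
So the lemma reduces to showing that, for the value of $y_i$ fixed by~(\ref{eqn:dp1})--(\ref{eqn:dp2}), the maximum on the right equals $\sum_{j=1}^{y_i} w_{i,j}$.

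To verify this, I would first observe that the target value is actually attained at $j = y_i$ (and equals $0$ when $y_i = 0$, which is the trivial bound at $j = 1$): the $j = y_i$ expression simplifies to $w_{i,y_i} + \sum_{\ell=1}^{y_i - 1} w_{i,\ell} = \sum_{j=1}^{y_i} w_{i,j}$. It then suffices to show that no other index yields a strictly larger value, which splits into two cases, both relying on the monotonicity $w_{i,1}\leq \cdots \leq w_{i,n_i}$ imposed earlier. For $j < y_i$, I would bound $(y_i - j + 1)\, w_{i,j} \leq \sum_{\ell=j}^{y_i} w_{i,\ell}$ (the right-hand side has $y_i - j + 1$ terms, each at least $w_{i,j}$) and add $\sum_{\ell=1}^{j-1} w_{i,\ell}$ to both sides to recover $\sum_{\ell=1}^{y_i} w_{i,\ell}$ as the upper bound. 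For $j > y_i$, I would compare directly against $\sum_{\ell=1}^{y_i} w_{i,\ell}$: the difference equals $(y_i - j + 1)\, w_{i,j} + \sum_{\ell=y_i+1}^{j-1} w_{i,\ell}$, and since this latter sum contains exactly $j - y_i - 1$ terms each at most $w_{i,j}$, it is bounded above by $(j - y_i - 1)\, w_{i,j}$, making the difference non-positive.

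The proof is essentially a careful but routine case analysis, and the only mild obstacle is the $j > y_i$ case, where the coefficient $(y_i - j + 1)$ is negative and signs have to be tracked carefully when invoking the monotonicity of the weights. Once this is done, every optimal solution of $\Pi'$ satisfies $z_i = \sum_{j=1}^{y_i} w_{i,j}$, so its linear objective $\sum_i z_i$ agrees on optimal solutions with the convex objective~(\ref{eqn:obj1}) of $\Pi$. Since both formulations share the same integer variables and the same constraints~(\ref{eqn:dp1}) and~(\ref{eqn:dp2}), this yields the equivalence of $\Pi$ and $\Pi'$ as claimed.
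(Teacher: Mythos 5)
Your proposal is correct and follows essentially the same route as the paper: both arguments observe that, since each $z_i$ appears only in the constraints (\ref{eqn:dp3}) and is minimized in the objective, it is driven down to the largest of its lower bounds, and both then use the ordering $w_{i,1}\leq \cdots \leq w_{i,n_i}$ to show this maximum equals $\sum_{j=1}^{y_i} w_{i,j}$. The only cosmetic difference is that the paper establishes unimodality of the bounds by comparing consecutive indices $j$ and $j+1$ (max attained at $j=y_i$ or $j=y_i+1$), whereas you compare each bound directly against the value at $j=y_i$; both case analyses are sound.
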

\begin{proof}
Consider first the set of $n_i$ constraints of type (\ref{eqn:dp3}) for some fixed $i \in \{1,\ldots,k\}$. Due to the fact that the jobs in $\mathcal{S}_i$ are ordered in non-decreasing order of weights, we know that for each $j \in \{1,\ldots, y_i\}$ we have
\begin{equation*}
(y_{i}-j+1)w_{ij}+\sum_{\ell=1}^{j-1}w_{i,\ell}\leq (y_{i}-j)w_{i,j+1}+\sum_{\ell=1}^{j}w_{i,\ell},
\end{equation*}
and for each $j \in \{y_i+1,\ldots,n_i\}$ we have
\begin{equation*}
(y_{i}-j+1)w_{ij}+\sum_{\ell=1}^{j-1}w_{i,\ell}\geq (y_{i}-j)w_{i,j+1}+\sum_{\ell=1}^{j}w_{i,\ell}.
\end{equation*}%

The fact that both inequalities above hold leads to the conclusion that the maximal value of the right side of all $n_i$ constraints of type (\ref{eqn:dp3}) corresponding to $\mathcal{S}_i$ is when either $j=y_{i}$ or $j=y_{i}+1$. Note that in both cases this term is equal to $\sum_{j=1}^{y_{i}}w_{i,j}$, and so we have $
z_i \geq \sum_{j=1}^{y_{i}}w_{i,j}$. Since $z_i$ does not appear in any other constraint, and since the objective in $\Pi$ is to minimize $\sum_i z_i$, this inequality holds in equality in any optimal solution for $\Pi$.
\end{proof}

The fact that $\Pi'$ is a MILP which is equivalent to $\Pi$ implies that we can also use Lenstra's FPT algorithm for MILPs with a parameterized number of integer variables~\cite{Len83}, as a tool to solve our problem in FPT time.

\section{An FPT algorithm for parameter \boldmath{$nu_{d}+nu_{w}$}}
\label{section:dw}

In this section we prove the second part of Theorem~\ref{theorem:fpt} by providing a fixed-parameter algorithm with respect to $k=nu_d+nu_w$. Our algorithm for this parameter is very similar to the $nu_d+nu_p$ case. Again, we show that the $1\left\vert {}\right\vert \Sigma w_{j}U_{j}$ problem can be formalized as an MICP with $O(k)$ integer variables, and apply Dadush's \emph{et al}. FPT algorithm for MICPs with a parameterized number of variables~\cite{D11}.

Partition the set of input jobs $\mathcal{J}$ into $k$ classes, $\mathcal{S}_1,\ldots,\mathcal{S}_k$, such that all jobs belonging to $\mathcal{S}_i$ have the same due date $d_{i}$ and same weight $w_{i}$. Let $n_{i}=|\mathcal{S}_i|$ for each $i=1,\ldots,k$. To formulate our MICP, we use Lemma~\ref{lemma:dp1} along with the observation below:

\begin{lemma}
\label{lemma:dw1}
Consider an optimal schedule that follows the structure in Lemma~\ref{lemma:dp1} and includes $y_{i}$ tardy jobs from each~$\mathcal{S}_i$, $1 \leq i \leq k$. Then there is an optimal schedule that follows Lemma~\ref{lemma:dp1} such that for each $i$, $1 \leq i \leq k$, the set of tardy jobs includes the $y_{i}$ jobs in $\mathcal{S}_i$ with the largest processing time.
\end{lemma}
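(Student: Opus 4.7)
The plan is to prove this via a pairwise interchange argument essentially identical in spirit to the proof of Lemma~\ref{lemma:dp2}. Start with an optimal schedule $\pi$ that satisfies Lemma~\ref{lemma:dp1} and has exactly $y_i$ tardy jobs in each class $\mathcal{S}_i$. Assume toward contradiction that for some index $i$, the tardy subset of $\mathcal{S}_i$ in $\pi$ does not consist of the $y_i$ jobs with largest processing time. Then there exist jobs $J_\ell, J_m \in \mathcal{S}_i$ with $p_\ell > p_m$ such that $J_\ell$ is early and $J_m$ is tardy in $\pi$.

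Construct a new schedule $\pi'$ by inserting $J_m$ into the position previously occupied by $J_\ell$ in the EDD-ordered early block, and relocating $J_\ell$ to the tardy block (whose internal order is arbitrary by Lemma~\ref{lemma:dp1}). Since $J_\ell$ and $J_m$ share the same due date $d_i$, the EDD ordering of the early block is maintained. Because $p_m < p_\ell$, the completion times of all jobs processed before the swapped position are unchanged, while the completion times of all early jobs from that position onward strictly decrease by $p_\ell - p_m$. Thus every early job in $\pi'$ still meets its due date, so $\pi'$ is feasible.

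For optimality, observe that in the setting of this section all jobs in $\mathcal{S}_i$ share the common weight $w_i$, so interchanging which job of $\mathcal{S}_i$ is early and which is tardy leaves $\sum_j w_j U_j$ unchanged; hence $\pi'$ is also optimal, again satisfies Lemma~\ref{lemma:dp1}, and still has $y_i$ tardy jobs per class. To guarantee we reach the desired structure in finitely many steps, we apply the argument to an optimal schedule maximizing $\sum_{J_j \text{ tardy}} p_j$ among all schedules satisfying Lemma~\ref{lemma:dp1} with the given class counts $y_1,\ldots,y_k$; such an extremal schedule admits no swap of the above type, which forces the tardy jobs in each $\mathcal{S}_i$ to be the $y_i$ with largest processing times.

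The only subtle point, and the step requiring care, is the feasibility verification of $\pi'$: one has to check that replacing $J_\ell$ by the shorter job $J_m$ in the early block does not violate any earlier-occurring due date. This is precisely where the fact $p_m < p_\ell$ (rather than just $p_m \neq p_\ell$) is essential, together with the observation that $J_m$ inherits the EDD slot of $J_\ell$, so that the prefix sums $\sum p$ used in constraint~(\ref{eqn:dp2}) only decrease (or stay the same) at every relevant due date.
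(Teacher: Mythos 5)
Your proof is correct and uses essentially the same pairwise-interchange argument as the paper: swap an early long job of $\mathcal{S}_i$ with a tardy shorter one, note that equal due dates and weights preserve feasibility and the objective value, and iterate. Your extremal argument (maximizing $\sum_{J_j\text{ tardy}} p_j$) is a slightly more careful way of justifying termination than the paper's ``perform a similar interchange on any such pair,'' but the substance is identical.
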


\begin{proof}
Consider an optimal schedule $\pi$ that follows the structure in Lemma~\ref{lemma:dp1}, and does not follow the structure in the above lemma. That is, $\pi$ includes at least one pair of jobs $J_\ell, J_m \in \mathcal{S}_i$ with $p_\ell<p_m$, and $J_m$ is early in $\pi$ while $J_\ell$ is not. Now, construct an alternative schedule $\pi'$ from $\pi$ by interchanging the positions of jobs $J_\ell$ and $J_m$. The fact that $p_\ell<p_m$ implies that $J_\ell$ is completed in $\pi'$ prior to the completion time of $J_m$ in $\pi$, and that the completion time of $J_m$ in $\pi'$ is equal to the completion time of $J_\ell$ in $\pi$. As both jobs share the same due date and weight, and the completion time of each of the other jobs in $\pi'$ is not later then its completion time in $\pi$, we can conclude that schedule $\pi'$ is optimal as well. By performing a similar pairwise interchange on any such pair of jobs, we end up with a schedule as in the statement of the lemma.
\end{proof}

Following Lemma~\ref{lemma:dp1}, we may assume without loss of generality that $d_{1}\leq d_{2}\leq \ldots\leq d_{k}$. Following Lemma~\ref{lemma:dw1}, we sort each $\mathcal{S}_i$ in non-decreasing order of processing times, so that if~$J_{i,j}$ is the $j$'th job in $\mathcal{S}_i$, then $p_{i,1}\leq p_{i,2}\leq \ldots\leq p_{i,n_{i}}$, for each $i=1,\ldots,k$.

As in Section~\ref{section:dp}, let $x_{i}$ and $y_{i}$, respectively, be non-negative integer variables that represent the number of early jobs and tardy jobs in $\mathcal{S}_i$, for $i=1,\ldots,k$. The objective of our MICP is given by

\begin{equation}
\label{eqn:dw}%
Z =\sum_{i=1}^{k}w_{i}y_{i}.
\end{equation}

Again, we add the $k$ constraints of type (\ref{eqn:dp1}) to ensure that the pairs $x_i,y_i$ sum up appropriately. Moreover, we add $k$ new non-negative non-integer variables $z_1,\ldots,z_k$, where $z_i$ represents the total processing time of the early jobs in $\mathcal{S}_i$. To ensure that indeed it is feasible to have $x_i$ early jobs in each set $\mathcal{S}_i$, the following constraint is included for each $i=1,\ldots,k$:
\begin{equation}
\label{eqn:dw1}%
\sum_{j=1}^{i}z_{j}\leq d_{i}.
\end{equation}%

According to Lemma~\ref{lemma:dw1} and the ordering of each $\mathcal{S}_i$ ($i=1,...,k$), we have that

\begin{equation}
\label{eqn:dw2}
z_{i}=\sum_{j=1}^{x_{i}}p_{i,j}.
\end{equation}

Accordingly, our problem can be solved by solving formulation $\Pi$ where we need to minimize the linear objective in (\ref{eqn:dw}) subject to the set of linear constraints in (\ref{eqn:dp1}) and (\ref{eqn:dw1}), and the set of constraints in (\ref{eqn:dw2}).

\begin{lemma}
\label{lemma:dw3}
Each of the $k$ constraints in (\ref{eqn:dw2}) is a convex function.
\end{lemma}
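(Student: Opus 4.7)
The plan is to mirror the argument used in the earlier proof that the objective in (\ref{eqn:obj1}) is convex (Lemma~\ref{lemma:dp3}), which proceeded via a discrete second-difference computation. Interpreting the statement of the lemma, the content to verify is that the right-hand side of each constraint in (\ref{eqn:dw2}), namely the function $g_i(x_i) = \sum_{j=1}^{x_i} p_{i,j}$, is convex in the integer variable $x_i$. This is exactly the property needed so that the relaxed inequality $z_i \geq g_i(x_i)$ cuts out a convex feasible region, which is what qualifies the whole formulation for the Dadush \emph{et al.}~MICP framework.

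Concretely, I would reduce convexity of $g_i$ to non-negativity of its discrete second differences: it suffices to show that $g_i(x_i+1) + g_i(x_i-1) - 2 g_i(x_i) \geq 0$ for every admissible $x_i$, since this certifies convexity of the piecewise linear extension of $g_i$ to the real interval $[0, n_i]$. By telescoping, this second difference simplifies to $p_{i, x_i+1} - p_{i, x_i}$. The desired inequality is then $p_{i, x_i+1} \geq p_{i, x_i}$, which is guaranteed by the sorting step performed just after Lemma~\ref{lemma:dw1}: within each class $\mathcal{S}_i$ the jobs were indexed so that $p_{i,1} \leq p_{i,2} \leq \cdots \leq p_{i,n_i}$.

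I do not anticipate any real obstacle: the argument is structurally identical to Lemma~\ref{lemma:dp3}, with $p_{i,j}$ now playing the role that $w_{i,j}$ played there, and the ordering assumption that supplies the non-negative second difference was deliberately installed when we rearranged each $\mathcal{S}_i$. The only subtlety worth flagging is conceptual rather than computational: because $z_i$ does not appear in the objective and enters (\ref{eqn:dw1}) only from above, relaxing the equality in (\ref{eqn:dw2}) to the convex inequality $z_i \geq g_i(x_i)$ is without loss of generality — at an optimum we may always push $z_i$ down to $g_i(x_i)$ — so the discrete convexity established above is exactly what is needed to legitimately treat $\Pi$ as an MICP with $O(k)$ integer variables.
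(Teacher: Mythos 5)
Your proof is correct and follows essentially the same route as the paper: both compute the discrete second difference $z_i(x_i+1)+z_i(x_i-1)-2z_i(x_i)=p_{i,x_i+1}-p_{i,x_i}$ and conclude non-negativity from the sorting $p_{i,1}\leq\cdots\leq p_{i,n_i}$ installed after Lemma~\ref{lemma:dw1}. Your closing remark --- that the equality in (\ref{eqn:dw2}) must implicitly be relaxed to $z_i \geq \sum_{j=1}^{x_i} p_{i,j}$ for the feasible region to be convex, which is harmless since $z_i$ only appears elsewhere in the upper bounds (\ref{eqn:dw1}) --- is a worthwhile clarification the paper leaves implicit, but it does not change the substance of the argument.
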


\begin{proof}
Let $z_i=z_i(x_i)=\sum_{j=1}^{x_{i}}p_{i,j}$ ($i=1,...,k$). To prove that $z_i(x_i)$ is a convex function, it is enough to show that $f(x_i)=z_i(x_i+1)+z_i(x_i-1)-2z_i(x_i) \geq 0$. It is easy to show that $f(x_i)=p_{i,x_{i+1}}-p_{i,x_i}$, which is indeed a non-negative value due to the fact that we order the jobs in each $\mathcal{S}_i$ such that $p_{i,1}\leq p_{i,2}\leq \ldots\leq p_{i,n_{i}}$.
\end{proof}

Following Lemma~\ref{lemma:dw3}; the fact that all other functions in $\Pi$ are linear; and that $\Pi$ includes $O(k)$ integer variables leads to the the conclusion that we can use Dadush's \emph{et al} FPT algorithm for MICPs with a parameterized number of integer variables~\cite{D11} to solve our problem, completing the proof of the second part of Theorem~\ref{theorem:fpt}.

Here as well, we can provide an equivalent MILP formulation (denoted by $\Pi'$) to formulation $\Pi$. This enables us to use an alternative FPT algorithm for our problem by applying Lenstra's classical FPT algorithm for solving MILPs with a parameterized number of integer variables.  Since the objective  in (\ref{eqn:dw}) and the constraints in (\ref{eqn:dp1}) and (\ref{eqn:dw1}) are all linear, we need only to deal with the non-linearity of the set of constraints in (\ref{eqn:dw2}). To overcome this difficulty, for each $i=1,...,k$ we replace the convex constraint in (\ref{eqn:dw2}) by the following set of linear constraints:
\begin{equation}
\label{eqn:dw3}%
z_{i}\geq (x_{i}-j+1)p_{i,j}+\sum_{\ell=1}^{j-1}p_{i,\ell}\text{ for all } j=1,\ldots,n_{i}.
\end{equation}%

Accordingly, in problem $\Pi'$ our goal is to minimize the objective  in (\ref{eqn:dw}) subject to the set of linear constraints in (\ref{eqn:dp1}), (\ref{eqn:dw1}) and (\ref{eqn:dw3}). Therefore, $\Pi'$ is an MILP formulation. The following lemma complete the proof that $\Pi'$ is indeed equivalent to $\Pi$ (the proof of the following lemma is very similar to that of Lemma~\ref{lemma:dp3}, and is therefore left to the reader).

\begin{lemma}
\label{lemma:dw4}
In an optimal solution for $\Pi$ we have $z_{i}=\sum_{j=1}^{x_i}p_{i,j}$, for each $i=1,\ldots,k$.
\end{lemma}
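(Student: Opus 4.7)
The plan is to recycle the argument used in Lemma~\ref{lemma:dp3} almost verbatim, with the roles of $(y_i, w_{i,j})$ swapped for $(x_i, p_{i,j})$. The statement I need to verify is that the linearized family of inequalities (\ref{eqn:dw3}) forces $z_i$ to take its intended value $\sum_{j=1}^{x_i} p_{i,j}$ at optimality of the MILP $\Pi'$, so that (\ref{eqn:dw2}) is effectively recovered.

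The first step is to analyze, for fixed $i$, the family of lower bounds on $z_i$ indexed by $j \in \{1,\ldots,n_i\}$. Exploiting the sort order $p_{i,1}\leq p_{i,2}\leq\ldots\leq p_{i,n_i}$, a short forward-difference calculation shows that the right-hand side of (\ref{eqn:dw3}) is non-decreasing in $j$ on $\{1,\ldots,x_i\}$ and non-increasing on $\{x_i,\ldots,n_i\}$; concretely, the discrete derivative factors as $(x_i-j)(p_{i,j+1}-p_{i,j})$, which has the required sign pattern precisely because the processing times are sorted. Hence the tightest of the $n_i$ constraints is attained at $j=x_i$ (equivalently $j=x_i+1$), and both yield the common value $\sum_{j=1}^{x_i} p_{i,j}$. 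This establishes the pointwise bound $z_i \geq \sum_{j=1}^{x_i} p_{i,j}$ in every feasible solution of $\Pi'$.

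The second step is to argue that this lower bound is achieved at optimality. The key observation is that $z_i$ does not appear in the objective (\ref{eqn:dw}), and appears elsewhere only on the left-hand sides of (\ref{eqn:dw1}) and (\ref{eqn:dw3}); in particular, decreasing $z_i$ can only relax (\ref{eqn:dw1}). Therefore, starting from any optimal triple $(x, y, z)$ and replacing each $z_i$ by $\sum_{j=1}^{x_i} p_{i,j} \leq z_i$ preserves feasibility of (\ref{eqn:dp1}), (\ref{eqn:dw1}), and (\ref{eqn:dw3}), while leaving the objective unchanged. The resulting solution is optimal and has the claimed form.

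No serious obstacle is anticipated: the argument is entirely symmetric to Lemma~\ref{lemma:dp3}, which is precisely why the authors are content to leave it to the reader. The only mild care needed is in the boundary indices $j=1$ and $j=n_i$ of the forward-difference computation, which follow immediately from monotonicity of the sequence $(p_{i,j})_{j=1}^{n_i}$.
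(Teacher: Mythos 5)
Your proof is correct and is exactly the adaptation of Lemma~\ref{lemma:dp3} that the paper intends when it leaves this lemma to the reader: the forward difference of the right-hand side of (\ref{eqn:dw3}) indeed factors as $(x_i-j)(p_{i,j+1}-p_{i,j})$, so the tightest constraint sits at $j=x_i$ (equivalently $j=x_i+1$) and yields $z_i \geq \sum_{j=1}^{x_i} p_{i,j}$. You also correctly flag the one step that is not purely mechanical in the transfer from Lemma~\ref{lemma:dp3}: here $z_i$ is absent from the objective (\ref{eqn:dw}) and instead appears on the left of (\ref{eqn:dw1}), so equality is obtained by decreasing $z_i$ to the bound (which only relaxes (\ref{eqn:dw1}) and leaves the objective unchanged), giving the claim for \emph{some} optimal solution, which is all that the equivalence of $\Pi$ and $\Pi'$ requires.
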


The fact that $\Pi'$ is an MILP which is equivalent to $\Pi$ and includes only $O(k)$ integer variables, implies that we can use also Lenstra's FPT algorithm for MILPs with a parameterized number of integer variables~\cite{Len83}, as a tool to solve our problem in FPT time.

\section{An FPT algorithm for parameter \boldmath{$nu_{p}+nu_{w}$}}
\label{section:pw}

Next we consider the third and final part of Theorem~\ref{theorem:fpt} concerning parameter $k=nu_p+nu_w$. Our algorithm for this case will be slightly different from the previous two cases. We again use the powerful tool of MILPs, but this time in an alternative manner. First we will formulate the problem as an integer linear program (ILP) that has $O(n+k)$ integer variables. Then we by relaxing the a subset of the integer variables we obtain a MILP relaxation of the original formulation with $O(k)$ integer variables. Finally, we prove that any optimal solution to the MILP relaxation can be rounded in linear time to a feasible solution for the ILP without any lose in the objective value.

Similar to previous sections, we begin by partitioning $\mathcal{J}$ into $k$ subsets, $S_{1},\ldots,S_{k}$, such that all jobs belonging to set $\mathcal{S}_i$ have the same processing time $p_i$, and weight $w_i$. Moreover, let $n_{i}=|\mathcal{S}_i|$ for $i=1,\ldots,k$ as usual. Our ILP formulation is based on exploiting the following lemma:
\begin{lemma}
\label{lemma:pw1}%
If $x_{i}$ is the optimal number of early jobs in $\mathcal{S}_i$ then there exists an optimal solution in which the $x_{i}$ jobs with the latest due date in $\mathcal{S}_i$ are early.
\end{lemma}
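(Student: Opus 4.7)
The plan is a standard pairwise interchange argument, analogous to (but slightly more delicate than) the ones used for Lemma~\ref{lemma:dp2} and Lemma~\ref{lemma:dw1}. The wrinkle is that within a class $\mathcal{S}_i$ the jobs now share $p_i$ and $w_i$ but have different due dates, so after a swap we must re-check that the newly ``early'' job truly fits before its due date.

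First I would start from an arbitrary optimal schedule and, invoking Lemma~\ref{lemma:dp1}, assume WLOG that the early jobs in $\pi$ are processed in EDD order followed by the tardy jobs. Suppose for contradiction that some class $\mathcal{S}_i$ contains two jobs $J_\ell, J_m$ with $d_\ell < d_m$ such that $J_\ell$ is early in $\pi$ and $J_m$ is tardy. I would then define $\pi'$ by removing $J_\ell$ from the set of early jobs, inserting $J_m$ there, and re-sorting the early jobs by EDD (the tardy jobs can be placed in any order at the end).

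The heart of the argument is verifying that $\pi'$ remains feasible, i.e.\ every newly designated early job completes by its due date. Since $p_\ell = p_m$, one checks class-by-class on due-date intervals: for any early job $J'$ with $d_{J'}\le d_\ell$ or $d_{J'} \ge d_m$, the set of predecessors in EDD order is unaffected (we simply replaced $J_\ell$ by an equal-length job $J_m$ that now sits at or after $J'$, or before $J'$, respectively), so its completion time is unchanged. For early jobs $J'$ with $d_\ell < d_{J'} < d_m$, the predecessor set loses $J_\ell$ and does not gain $J_m$, so $J'$'s completion time strictly decreases. Finally, for $J_m$ itself, its completion time in $\pi'$ equals $\sum_{J'' \in E : d_{J''} \le d_m} p_{J''}$, where $E$ is the set of early jobs in $\pi$; this quantity is the completion time in $\pi$ of the last EDD-job of $E$ with due date at most $d_m$, and since that job is early in $\pi$ by assumption, the sum is at most $d_m$. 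Hence $J_m$ is early in $\pi'$.

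Because $J_\ell$ and $J_m$ share the same weight, $\sum w_j U_j(\pi') = \sum w_j U_j(\pi)$, so $\pi'$ is also optimal. Iterating this swap over all offending pairs within each class (the number of such pairs strictly decreases each time, say under the potential $\sum_{J_\ell \in E, J_m \notin E, J_\ell, J_m \in \mathcal{S}_i} \mathbf{1}[d_\ell < d_m]$) yields an optimal schedule in which, for every class $\mathcal{S}_i$, the $x_i$ early jobs are exactly the $x_i$ jobs of $\mathcal{S}_i$ with the latest due dates, proving the lemma. The only nontrivial step is the completion-time bound for $J_m$; everything else is bookkeeping.
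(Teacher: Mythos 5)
Your proof is correct and follows essentially the same pairwise-interchange strategy as the paper's. The paper's execution is simpler, though: it swaps the positions of $J_\ell$ and $J_m$ directly (leaving every other job in place), so that the two jobs, having equal processing times, exactly exchange completion times and feasibility of $J_m$ is immediate from $C_\ell(\pi)\leq d_\ell<d_m$; your removal/re-insertion with an EDD re-sort is also valid but forces the more delicate class-by-class completion-time verification that occupies most of your argument.
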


\begin{proof}
Consider an optimal schedule $\pi$ that includes at least one pair of jobs $J_\ell, J_m \in \mathcal{S}_i$ with $d_\ell<d_m$, and $J_m$ is tardy in $\pi$ while $J_\ell$ is not. Note that in $\pi$, job $J_\ell$ is scheduled prior to $J_m$, as otherwise $J_\ell$ is tardy as well.  Now, construct an alternative schedule $\pi'$ from $\pi$ by interchanging the positions of jobs $J_\ell$ and $J_m$. The fact that $p_\ell=p_m$ implies that $J_\ell$ is completed in $\pi'$ at the completion time of $J_m$ in $\pi$, and that the completion time of $J_m$ in $\pi'$ is equal to the completion time of $J_\ell$ in $\pi$. Thus, based on the fact that $d_\ell<d_m$, we can conclude that $J_m$ is early while $J_\ell$ is late in $\pi'$. The fact that both jobs share the same weight, and that the completion time of each of the other jobs remains unchanged leads to the conclusion that $\pi'$ is optimal as well. By preforming a similar pairwise interchange on any such pair of jobs, we end up with a schedule that satisfies the lemma above.
\end{proof}

Let $d_1,\ldots,d_{n_d}$ be set of due dates in our input job set $\mathcal{J}$, and assume without loss of generality that $d_{1}\leq d_{2}\leq \ldots\leq d_{n_d}$. Moreover, let $\delta_{i,j}$ be the number of jobs in $\mathcal{S}_i$ having a due date of $d_{j}$, for $i=1,\ldots,k$ and $j=1,\ldots,n_d$. We present an ILP formulation for the $1\left\vert {}\right\vert \Sigma w_{j}U_{j}$ problem, denoted by $\Pi_1$, which has $O(n+k)$ integer variables. For this, define first a set of $k$ non-negative integer variables $y_1,\ldots,y_k$ representing the number of tardy jobs in each $\mathcal{S}_i$. The objective function of $\Pi_1$ is to minimize the total weighted number of tardy jobs given by $Z=\sum^k_{i=1} w_iy_i$.

For $i\in \{1,\ldots,k\}$ and $j\in \{1,\ldots,n_d\}$, let $x_{i,j}$ be an integer variable representing the number of early jobs in $\mathcal{S}_i$ that have a due date of $d_{j}$. By definition, we have that
\begin{equation}
\label{eqn:pw1}%
x_{i,j}\leq \delta_{i,j} \text{ for all }i\in \{1,\ldots,k\} \text{ and all } j\in \{1,\ldots,n_d\},
\end{equation}%
and that%
\begin{equation}
\label{eqn:pw2}%
n_{i}-\sum_{j=1}^{n_d}x_{i,j}=y_{i}  \text{ for all }i\in \{1,\ldots,k\}.
\end{equation}%
Finally, to make sure that each early job is completed not later then its corresponding due date, we include the following set of constraints:
\begin{equation}
\label{eqn:pw3}%
\sum_{i=1}^{k} \sum_{j=1}^{\ell} p_{i}x_{i,j}\leq d_{\ell} \text{ for all } \ell \in \{1,\ldots,n_d\}.
\end{equation}%

Thus, $\Pi_1$ is the problem of minimizing $Z=\sum_i w_iy_i$ subject to all constraints of type (\ref{eqn:pw1}),(\ref{eqn:pw2}), and (\ref{eqn:pw3}). Note that $\Pi_1$ is indeed an ILP formulation for the $1\left\vert {}\right\vert \Sigma w_{j}U_{j}$ problem. However it has too many integer variables to apply either Dadush's \emph{et al}. or Lenstra's algorithm directly. To circumvent this, we define an MILP $\Pi_2$ where we relax that constraint that all the $x_{i,j}$'s must be integer, and only require that they have to be non-negative. In this way, $\Pi_2$ is an MILP relaxation of $\Pi_1$ with $O(k)$ integer variables, and we can compute an optimal solution for $\Pi_2$ in FPT time with respect to $k$ using either Dadush's \emph{et al}. or Lenstra's algorithm.

Let $(x^*,y^*)$, where $x^*=(x^*_{i,j})$ and $y^*=(y^*_{i})$ for $i\in \{1,\ldots,k\}$ and $j\in \{1,\ldots,\ell\}$, be an optimal solution for $\Pi_2$, and let $x^*_i=\sum_{j=1}^{n_d}x^*_{i,j}$ for $i\in \{1,\ldots,k\}$. Note that $x^*_i$ is an integer value for $i\in \{1,\ldots,k\}$ due to (\ref{eqn:pw2}) and the fact that both $n_i$ and $y_i$ are integer values.
Now, if $(x^*,y^*)$ is a feasible solution for $\Pi_1$ (i.e., all $x^*_{i,j}$ are assigned integer values), then $(x^*,y^*)$ is also an optimal solution for $\Pi_1$. Otherwise, the value of some of the $x^*_{i,j}$ variables is not integer. In this case, we use the following rounding procedure to obtain an alternative optimal solution $(\tilde{x},y^*)$ for $\Pi_2$ which will be feasible also for $\Pi_1$:

\begin{quote}
\emph{Rounding Procedure}: For each $i=1,\ldots,k$, let $r_i$ be the integer satisfying
$$
\sum_{j=r_{i}+1}^{n_d}\delta_{i,j} \leq x^*_i < \sum_{j=r_{i}}^{n_d}\delta_{i,j}.
$$
Define
$$
\tilde{x}_{i,j} =
\begin{cases}
0 &\text{for } j=1,\ldots,r_i-1,\\
x^*_i-\sum_{j=r_{i}+1}^{n_d} \delta_{i,j} &\text{for } j=r_i,\\
\delta_{i,j} &\text{for } j=r_i+1,\ldots,n_d.
\end{cases}
$$
\end{quote}

\begin{lemma}
\label{lol2}
$(\tilde{x},y^*)$ is an optimal solution for $\Pi_1$.
\end{lemma}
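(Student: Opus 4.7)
The goal is to show both that $(\tilde{x},y^*)$ is feasible for $\Pi_1$ and that it attains the optimal objective value. Since the objective $Z=\sum_i w_i y_i$ depends only on the $y$-variables, and the rounding procedure leaves $y^*$ untouched, optimality will reduce immediately to feasibility: any feasible solution of $\Pi_1$ is also feasible for the relaxation $\Pi_2$, so the optimal value of $\Pi_1$ is at least that of $\Pi_2$, and $(\tilde{x},y^*)$ achieves the latter.

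The plan is therefore to verify the three families of constraints (\ref{eqn:pw1}), (\ref{eqn:pw2}), (\ref{eqn:pw3}) for $(\tilde{x},y^*)$, together with integrality of each $\tilde{x}_{i,j}$. Integrality is straightforward: for $j\neq r_i$ the value $\tilde{x}_{i,j}$ is either $0$ or $\delta_{i,j}$, and for $j=r_i$ it equals $x^*_i-\sum_{j=r_i+1}^{n_d}\delta_{i,j}$, which is an integer because $x^*_i=n_i-y^*_i$ is integer by (\ref{eqn:pw2}). Constraint (\ref{eqn:pw1}) is immediate outside $j=r_i$; at $j=r_i$ the definition of $r_i$ gives exactly $0\le \tilde{x}_{i,r_i}<\delta_{i,r_i}$. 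Constraint (\ref{eqn:pw2}) follows by telescoping $\sum_{j=1}^{n_d}\tilde{x}_{i,j}=x^*_i=n_i-y^*_i$ from the definition of $\tilde{x}_{i,j}$.

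The only non-trivial step, and thus the main obstacle, is verifying the capacity constraints (\ref{eqn:pw3}). The key observation is that the rounding procedure concentrates all the ``early mass'' $x^*_i$ of each class $\mathcal{S}_i$ at the largest possible due dates, which can only decrease every prefix sum. More precisely, I would prove
\[
\sum_{j=1}^{\ell}\tilde{x}_{i,j}\ \le\ \sum_{j=1}^{\ell} x^*_{i,j}\quad\text{for all } i\in\{1,\dots,k\},\ \ell\in\{1,\dots,n_d\},
\]
by a short case split on whether $\ell<r_i$ (left-hand side is $0$) or $\ell\ge r_i$ (use $\sum_{j=\ell+1}^{n_d}\tilde{x}_{i,j}=\sum_{j=\ell+1}^{n_d}\delta_{i,j}\ge \sum_{j=\ell+1}^{n_d}x^*_{i,j}$ by (\ref{eqn:pw1}), combined with $\sum_{j=1}^{n_d}\tilde{x}_{i,j}=x^*_i=\sum_{j=1}^{n_d}x^*_{i,j}$). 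Multiplying by $p_i\ge 0$ and summing over $i$ gives
\[
\sum_{i=1}^{k}\sum_{j=1}^{\ell} p_i\tilde{x}_{i,j}\ \le\ \sum_{i=1}^{k}\sum_{j=1}^{\ell} p_i x^*_{i,j}\ \le\ d_\ell,
\]
where the last inequality is (\ref{eqn:pw3}) applied to $(x^*,y^*)$.

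Having established feasibility, optimality is immediate: the objective value of $(\tilde{x},y^*)$ equals $\sum_i w_i y^*_i$, which is the optimum of the relaxation $\Pi_2$; any feasible $(x,y)$ for $\Pi_1$ yields a feasible solution of $\Pi_2$ with the same objective value, so this value is also a lower bound on the optimum of $\Pi_1$. Hence $(\tilde{x},y^*)$ is an optimal solution for $\Pi_1$, as claimed. The rounding procedure is linear-time, so together with the FPT algorithm for the $O(k)$-integer-variable MILP $\Pi_2$ this yields the desired fixed-parameter algorithm.
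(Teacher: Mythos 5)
Your proof is correct and follows essentially the same route as the paper's: reduce optimality to feasibility via the relaxation argument, verify integrality and constraints (\ref{eqn:pw1})--(\ref{eqn:pw2}) directly, and establish (\ref{eqn:pw3}) by showing the rounding can only decrease every prefix sum $\sum_{j=1}^{\ell}\tilde{x}_{i,j}$ relative to $\sum_{j=1}^{\ell}x^*_{i,j}$. If anything, your case split on $\ell<r_i$ versus $\ell\geq r_i$ supplies an explicit justification for the prefix-minimality claim that the paper merely asserts as an observation.
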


\begin{proof}
Note that $(\tilde{x},y^*)$ and $(x^*,y^*)$ have the same objective value in $\Pi_2$. Thus, to show that $(\tilde{x},y^*)$ is an optimal solution for $\Pi_1$, it is enough to show that it is feasible in $\Pi_1$.

First, observe that the values of all variables in $(\tilde{x},y^*)$ is integer, since the value of each $\tilde{x}^*_{i,j}$ is defined by the integer variable $x^*_i$ and the integers $\delta_{i,j}$ for $j=1,\ldots,n_d$. Thus, it remains to argue that $(\tilde{x},y^*)$ satisfies all constraints of type (\ref{eqn:pw1}), (\ref{eqn:pw2}), and (\ref{eqn:pw3}). Note that by definition of $\tilde{x}$, we have $\tilde{x}_{i,j} \leq \delta_{i,j}$ for each $i\in \{1,...,k\}$ and $j\in \{1,...,n_d\}$, and so $(\tilde{x},y^*)$ satisfies all constraints of type (\ref{eqn:pw1}). Thus, the only interesting cases to consider are constraints of type (\ref{eqn:pw2}) and of type (\ref{eqn:pw3}).

Consider an optimal solution $(x^*,y^*)$ for $\Pi_2$. By definition of $\tilde{x}$, we have for each $i=1,\ldots,k$:
\begin{eqnarray*}
\tilde{x}_i \quad = \quad  \sum_{j=1}^{n_d}\tilde{x}_{i,j}  & = \quad  \sum_{j=1}^{r_i-1} \tilde{x}_{i,j} + \tilde{x}_{i,r_{i}} + \sum_{j=r_i+1}^{n_d} \tilde{x}_{i,j} & = \\  & 0 + (x^*_i-\sum_{j=r_i+1}^{n_d} \delta_{i,j})+\sum_{j=r_i+1}^{n_d} \delta_{i,j} & = \quad  x^*_i  \quad =\quad \sum_{j=1}^{n_d}x^*_{i,j}.
\end{eqnarray*}
Thus, $\sum_{j=1}^{n_d}\tilde{x}_{i,j} = \sum_{j=1}^{n_d}x^*_{i,j}$. Since $(x^*,y^*)$ satisfies all constraints of type (\ref{eqn:pw2}), we have
$$
n_{i}-\sum_{j=1}^{n_d} \tilde{x}_{i,j} = n_{i}-\sum_{j=1}^{n_d} x^*_{i,j} =y^*_i
$$
for each $i=1\ldots,k$, and so $(\tilde{x},y^*)$ also satisfies all constrains of type (\ref{eqn:pw2}).

Next, observe that the vector of variables $\tilde{x}$ minimizes the value of $\sum_{\ell=1}^{j}x_{i,\ell}$ for each $j=1,...,n_d$, subject to the restriction that $\sum_{j=1}^{n_d} x_{i,j} = x^*_i$ and $x_{i,j} \leq \delta_{i,j}$ for all $i=1,\ldots,k$ and $j=1,\ldots,n_d$. Therefore, since $x^*$ is also subject to these restrictions, we have
$$
\sum_{i=1}^{k}\sum_{\ell=1}^{j}p_{i}\tilde{x}_{i,\ell} \leq \sum_{i=1}^{k}\sum^j_{\ell=1} p_ix^*_{i,\ell} \leq d_{j},
$$
and so $(\tilde{x},y^*)$ satisfies all constraints of type (\ref{eqn:pw3}) as well.
\end{proof} 

\section{A polynomial-time algorithm for constant \boldmath{$nu_{w}$}}
\label{section:w}

In this section we provide an $O(n^{nu_w+1}\lg n)$ time dynamic programming algorithm for the $1\left\vert {}\right\vert \Sigma w_{j}U_{j}$ problem, proving the first part of Theorem~\ref{theorem:poly}. Let $w_{1},\ldots,w_{nu_w}$ denote the set of all different weights of $\mathcal{J}$. We say that job $J_{j} \in \mathcal{J}$ is of type $i$ if its weight is $w_{i}$. Furthermore, we assume that the jobs in $\mathcal{J}=\{J_1,\ldots,J_n\}$ are ordered according to the EDD rule, and so $d_{1}\leq d_{2}\leq \ldots\leq d_{n}$.

Throughout the section we will only be concerned with schedules that satisfy the properties of Lemma~\ref{lemma:dp1}. That is, schedules where all early jobs are scheduled first in an EDD order followed by the tardy jobs which are scheduled arbitrarily. Consider such a schedule $\pi_j: \mathcal{J}_j \to \{1,\ldots,j\}$ for the set of jobs $\mathcal{J}_j=\{J_1,\ldots,J_{j}\}$, for some $j < n$. We say that a schedule $\pi_{j+1}: \mathcal{J}_{j+1} \to \{1,\ldots,j+1\}$ is an \emph{extension} of $\pi_j$ if $\pi_{j+1}(J_i) = \pi_j(J_i)$ for every early job $J_i$ in $\pi_j$. 
For a pair of schedules $\pi_j^1,\pi_j^2 : \mathcal{J}_j \to \{1,\ldots,j\}$, we say that $\pi_j^1$ \emph{dominates} $\pi_j^2$ if an optimal extension of $\pi_j^1$ has an objective value not greater than an optimal extension of $\pi_j^2$. We have the following elimination property:
\begin{lemma}
\label{lemma:w1}%
Let $\pi_j^1$ and $\pi_j^2$ be two schedules for $\mathcal{J}_j$, both with $e_i$ early jobs of type $i$ for each $i=1,\ldots,nu_w$. Moreover, let $P_1$ and $P_2$ denote the total processing time of the $e=\sum^{nu_w}_{i=1} e_i$ early jobs of $\pi_j^1$ and $\pi_j^2$, respectively. If $P_1\leq P_2$, then $\pi_j^1$ dominates $\pi_j^2$.
\end{lemma}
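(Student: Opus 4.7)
The plan is to prove the lemma by an explicit construction: given any optimal extension $\pi_n^2$ of $\pi_j^2$, I will build an extension $\pi_n^1$ of $\pi_j^1$ that is at least as good, which suffices since the optimal extension of $\pi_j^1$ can only be better.

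First I would isolate the incoming jobs. Let $E \subseteq \{J_{j+1},\ldots,J_n\}$ denote the set of jobs that are early in $\pi_n^2$, and let $T = \{J_{j+1},\ldots,J_n\} \setminus E$ be the set scheduled as tardy in $\pi_n^2$. By Lemma~\ref{lemma:dp1}, I may assume $\pi_n^2$ schedules all of its early jobs in EDD order at the start. I then define $\pi_n^1$ to schedule the early jobs of $\pi_j^1$ in their current positions, followed by the jobs of $E$ in EDD order, followed by $T$ arbitrarily.

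Next I would verify that $\pi_n^1$ respects the required structure, i.e., that every job of $E$ is indeed early in $\pi_n^1$. Since jobs are numbered in EDD order and $E \subseteq \{J_{j+1},\ldots,J_n\}$, every job of $E$ has a due date at least as large as any early job of $\mathcal{J}_j$, so the concatenation respects EDD. For any $J_i \in E$, its completion time in $\pi_n^1$ equals $P_1 + \sum_{J_\ell \in E,\ \ell \leq i} p_\ell$, while in $\pi_n^2$ it equals $P_2 + \sum_{J_\ell \in E,\ \ell \leq i} p_\ell$. Because $P_1 \leq P_2$, the completion time of $J_i$ in $\pi_n^1$ is at most its completion time in $\pi_n^2$, which by feasibility of $\pi_n^2$ is at most $d_i$. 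Hence $J_i$ is early in $\pi_n^1$ as intended.

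Finally I would compare the objective values. The tardy set of $\pi_n^2$ is the union of the tardy jobs from $\mathcal{J}_j$ under $\pi_j^2$ and the set $T$; the tardy set of $\pi_n^1$ is the union of the tardy jobs from $\mathcal{J}_j$ under $\pi_j^1$ and the same set $T$. Since $\pi_j^1$ and $\pi_j^2$ have the same number $e_i$ of early jobs of each type $i$, they also have the same \emph{weighted} number of tardy jobs from $\mathcal{J}_j$, namely $\sum_{i=1}^{nu_w} w_i(n_i^{(j)} - e_i)$, where $n_i^{(j)}$ is the number of type-$i$ jobs in $\mathcal{J}_j$. Thus $\sum w_j U_j(\pi_n^1) = \sum w_j U_j(\pi_n^2)$, so the optimal extension of $\pi_j^1$ has objective at most that of the optimal extension of $\pi_j^2$, establishing dominance. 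The only subtlety worth double-checking is the EDD feasibility of the concatenated early schedule, which is the main reason the lemma's hypothesis about $\mathcal{J}_j$ being a prefix in EDD order matters; everything else is a direct type-count bookkeeping argument.
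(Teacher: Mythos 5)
Your proof is correct and follows essentially the same idea as the paper's: transplant the early/tardy decisions of an optimal extension of $\pi_j^2$ onto $\pi_j^1$ and use $P_1\leq P_2$ to show every transplanted early job still meets its due date, while the equal type-counts $e_i$ force equal partial objective values on $\mathcal{J}_j$. The only difference is that you carry out the construction globally for the full extension to $\mathcal{J}_n$, whereas the paper argues a single step (the placement of $J_{j+1}$); your version is the fully unrolled, and arguably more complete, form of the same argument.
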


\begin{proof}
First note that the two schedules have the same partial objective value of $\Sigma_{i=1}^{nu_w} w_{i}(n_{i,j}-e_{i})$, where $n_{i,j}$ is the number of jobs of type $i$ in job set $\mathcal{J}_j=\{J_{1},...,J_{j}\}$. Consider now an optimal extension of $\pi_j^2$ to a schedule $\pi^2_{j+1}$. Then $\pi^2_{j+1}(J_{j+1}) \geq e+1$ by the structure of Lemma~\ref{lemma:dp1}. Let $\pi^1_{j+1}$ be an extension of $\pi_j^1$ obtained by setting $\pi^1(J_{j+1}) = e+1$. The fact that $P_1\leq P_2$ implies that if $J_{j+1}$ is early in $\pi^2_{j+1}$, it is also early in $\pi^1_{j+1}$. Therefore, $\pi^1_{j+1}$ has an objective value not greater than that of $\pi^2_{j+1}$.
\end{proof}

Let us say that a schedule $\pi$ is of \emph{category} $(e_1,\ldots,e_{nu_w}) \in \{1,\ldots,n\}^{nu_w}$, if there are exactly $e_i$ early jobs of type $i$ in $\pi$, for each $i=1,\ldots,nu_w$. Based on Lemma \ref{lemma:w1}, we next present a dynamic programming algorithm that constructs the set of all schedules that dominate all other schedules in their category, for each category $(e_1,\ldots,e_{nu_w})$ and each subset of jobs $\mathcal{J}_j$. To do so, let $P_{j}(e_1,\ldots,e_{nu_w})$ denote the minimum total processing time of the early jobs among all schedules for $\mathcal{J}_j$ of category $(e_1,\ldots,e_{nu_w})$.



To compute the values $P_{j}(e_1,\ldots,e_{nu_w})$, we maintain data structures that allow us some bookkeeping. For each job index $j$ and category $(e_1,\ldots,e_{nu_w})$, we maintain a heap $H_{i,j}(e_1,\ldots,e_{nu_w})$ that contains the processing times of $e_i$ early jobs of type $i$. These processing times will be kept as small as possible throughout the computation, so that they correspond to the minimum total processing times of type $i$ early jobs in any schedule for $\mathcal{J}_j$ of category $(e_1,\ldots,e_{nu_w})$. A standard implementation of $H_{i,j}(e_1,\ldots,e_{nu_w})$ allows us to insert or remove an element from $H_{i,j}(e_1,\ldots,e_{nu_w})$ in $O(\lg n)$ time, as well as obtain the maximum value $h_{i,j}(e_1,\ldots,e_{nu_w})$ of $H_{i,j}(e_1,\ldots,e_{nu_w})$ in $O(1)$ time.

Our dynamic program computes the values $P_j(e_1,\ldots,e_{nu_w})$ in increasing $j$ and $e_1,\ldots,e_{nu_w}$, using the maximum values of the heaps computed in previous steps. Assume that job $J_j$ is of type~$i$.
We consider the following two cases to compute $P_j(e_{1}\ldots,e_{nu_w}):$\\
\begin{itemize}
\item \underline{$P_{j-1}(e_{1},\dots,e_{i}-1,\dots,e_{nu_w})+p_{j}> d_{j}$}: In this case we can only use $J_j$ to replace the early job of type $i$ with the maximum processing time in a schedule of the same category for $\mathcal{J}_{j-1}$. Accordingly, we set
$$
P_j(e_{1}\ldots,e_{nu_w})=P_{j-1}(e_{1},\ldots,e_{nu_w})+\min\{0,p_{j}-h_{i,j-1}(e_{1},\ldots,e_{nu_w})\}.
$$
We construct the set of $k$ heaps corresponding to this value in the natural manner. We first set $H_{i_0,j}(e_{1}\ldots,e_{nu_w}) = H_{i_0,j-1}(e_{1}\ldots,e_{nu_w})$ for all $i_0=1,\ldots,k$. Then, if $p_j < h_{i,j}(e_{1}\ldots,e_{nu_w})$, we remove $h_{i,j}(e_{1}\ldots,e_{nu_w})$ from $H_{i,j}(e_{1}\ldots,e_{nu_w})$, and add $p_j$ instead.
\\

\item \underline{$P_{j-1}(e_{1},\dots,e_{i}-1,\dots,e_{nu_w})+p_{j}\leq d_{j}$}: In this case, we can also safely add $J_j$ to set of early jobs of the schedule corresponding to $P_{j-1}(e_{1},\dots,e_{i}-1,\dots,e_{nu_w})$, placing him last among all early jobs. Therefore, we have
$$
P_j(e_{1}\ldots,e_{nu_w})=\min
\begin{cases}
P_{j-1}(e_{1},\ldots,e_{nu_w})+\min\{0,p_{j}-h_{i,j-1}(e_{1},\ldots,e_{nu_w})\}, \\
P_{j-1}(e_{1},\ldots,e_{i}-1,\ldots,e_{nu_w})+p_{j}.
\end{cases}
$$
In the first case of this recursion, we construct the set of $k$ heaps as above. In the second case, we first set $H_{i_0,j}(e_1,\ldots,e_{nu_w}) = H_{i_0,j}(e_{1},\ldots,e_{i}-1,\ldots,e_{nu_w})$ for all $i_0 =1,\ldots,k$. We then add $p_j$ to $H_{i,j}(e_1,\ldots,e_{nu_w})$.
\end{itemize}

We implement the above recursion and update the heaps accordingly for all $j\in \{0,\ldots,n\}$ and $e_i\in \{0,\ldots,n_{i,j}\}$, where $%
n_{i,j}$ is the number of jobs of type $i$ in $\mathcal{J}_j$. The base cases of the recursion are given by
$$
P_0(e_{1}\ldots,e_{nu_w})=
\begin{cases}
0 & :  \quad e_{1}=e_{2}=\cdots=e_{nu_w}=0, \\
\infty & :  \quad \text{otherwise}.
\end{cases}
$$
The heaps are initialized by $H_{i,0}(0,\ldots,0)=\emptyset$ for all $i\in \{1,\ldots,nu_w\}$. At the end of the dynamic programming implementation, the optimal solution value is

$$
Z^* = \min \left\{\sum_{i=1}^{nu_w} w_i(n_i-e_i) \quad : \quad P_{n}(e_1,e_2,\ldots,e_{nu_w})<\infty \right\}.
$$

This completes the description of our algorithm. The next three lemmas prove its correctness. In the first we show that each entry $P_j(e_1,\ldots,e_{nu_w}) \neq \infty$ corresponds to an actual schedule of $\mathcal{J}_j$ of category $(e_1,\ldots,e_{nu_w})$, and in the second we show that this schedule has minimum total processing time of early jobs among all schedules for $\mathcal{J}_j$ of the same category. The last lemma shows that if the entry equals $\infty$, then there is no schedule for $\mathcal{J}_j$ within that category.

\begin{lemma}
\label{lemma:w2}
Each entry $P_j(e_1,\ldots,e_{nu_w}) \neq \infty$ computed by the algorithm corresponds to a schedule for $\mathcal{J}_j$ of category $(e_1,\ldots,e_{nu_w})$ where the total processing time of early jobs equals $P_j(e_1,\ldots,e_{nu_w})$. Moreover, for each $i_0 \in \{1,\ldots,nu_w\}$, the values in $H_{i_0,j}(e_1,\ldots,e_{nu_w})$ correspond to the processing times of the $e_{i_0}$ early jobs of type $i_0$ in this schedule.
\end{lemma}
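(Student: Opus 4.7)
The plan is induction on $j$. The base case $j=0$ is immediate: $P_0(0,\ldots,0)=0$ with $H_{i_0,0}(0,\ldots,0)=\emptyset$ for every $i_0$ corresponds to the (unique) empty schedule of $\mathcal{J}_0=\emptyset$, while every other entry at $j=0$ equals $\infty$, making the claim vacuous.

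For the inductive step I assume the claim at step $j-1$, let $J_j$ be of type $i$, and fix a category $(e_1,\ldots,e_{nu_w})$ with $P_j(e_1,\ldots,e_{nu_w})\neq\infty$. The recursion computes this entry as the minimum of up to two options, and the plan is to verify that each option actually yields a schedule for $\mathcal{J}_j$ of the stated category with the claimed total early processing time, together with heap contents that list the early processing times by type. In Option A one starts from the schedule $\sigma$ for $\mathcal{J}_{j-1}$ of the same category given by induction for $P_{j-1}(e_1,\ldots,e_{nu_w})$. If $p_j\ge h_{i,j-1}(e_1,\ldots,e_{nu_w})$, append $J_j$ as a tardy job, leaving early jobs and heaps untouched. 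Otherwise perform a swap: remove the type-$i$ early job of maximum processing time (which equals $h_{i,j-1}(e_1,\ldots,e_{nu_w})$ by the inductive correspondence for the heap), make it tardy, and schedule $J_j$ last among the early jobs in its EDD position. This strictly decreases the total early processing time, so every surviving early job still meets its deadline, and the algorithm's heap update (remove $h_{i,j-1}$, insert $p_j$ into $H_i$, copy the other heaps) precisely tracks this change while preserving the category.

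Option B applies when $P_{j-1}(e_1,\ldots,e_i-1,\ldots,e_{nu_w})+p_j\le d_j$. Take the inductively guaranteed schedule $\tau$ for $\mathcal{J}_{j-1}$ of the reduced category, promote $J_j$ to early, and place it last among the early jobs. The completion time of $J_j$ is then exactly $P_{j-1}(e_1,\ldots,e_i-1,\ldots,e_{nu_w})+p_j\le d_j$, so the schedule is feasible; the completion times of all other early jobs are unchanged. The algorithm's heap update (copy each $H_{i_0,j-1}(e_1,\ldots,e_i-1,\ldots,e_{nu_w})$, then insert $p_j$ into $H_i$) matches the new set of early jobs by type. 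Since the algorithm sets $P_j(e_1,\ldots,e_{nu_w})$ and the accompanying heaps according to whichever of Options A and B achieves the minimum, the correspondence claimed by the lemma holds.

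The main obstacle is the feasibility check for the swap in Option A, and specifically verifying that $J_j$ still meets its deadline after being inserted last in the early sequence. This follows from the observation that, in any schedule of $\mathcal{J}_{j-1}$ that obeys Lemma~\ref{lemma:dp1}, the total early processing time equals the completion time of the last early job, whose due date is at most $d_{j-1}\le d_j$; hence $P_{j-1}(e_1,\ldots,e_{nu_w})\le d_j$, and since the swap strictly decreases this quantity, $J_j$'s new completion time is at most $P_{j-1}(e_1,\ldots,e_{nu_w})\le d_j$. Everything else reduces to routine bookkeeping on the heap operations.
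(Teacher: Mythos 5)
Your proof is correct and follows essentially the same route as the paper's: induction on $j$, with a case analysis over the two branches of the recursion in which the schedule witnessing $P_j(e_1,\ldots,e_{nu_w})$ is built explicitly from the inductively given schedules, together with the matching heap bookkeeping. The only cosmetic difference is in justifying that $J_j$ stays early after the swap: the paper compares $p_j\le p$ and $d_j\ge d$ against the evicted job, while you argue via $P_{j-1}(e_1,\ldots,e_{nu_w})\le d_{j-1}\le d_j$ and the fact that the swap only decreases the total early processing time --- both arguments are valid and equivalent in substance.
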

\begin{proof}
The proof is by induction on $j$. The base case of $j=0$ is immediate, so assume that $j > 0$, and that the lemma holds for $j-1$. Let $(e_1,\ldots,e_{nu_w})$ be some category, and let $i$ be the type of job $J_j$. Furthermore, let $P_j$, $P^1_{j-1}$ and $P^2_{j-1}$ be shorthand notation for $P_j(e_1,\dots,e_{nu_w})$, $P_{j-1}(e_1,\dots,e_i-1,\dots,e_{nu_w})$ and $P_{j-1}(e_1,\dots,e_{nu_w})$, respectively. Then our inductive hypothesis holds for both $P^1_{j-1}$ and $P^2_{j-1}$, and so let $\pi^1_{j-1}$ and $\pi^2_{j-1}$ respectively denote the two corresponding schedules promised by this hypothesis. We show how to construct a schedule $\pi_j$ corresponding to $P_j$ from both $\pi^1_{j-1}$ and $\pi^2_{j-1}$. Let $J \in \mathcal{S}_i \cap \mathcal{J}_{j-1}$ denote the job with the maximum processing time among all early jobs of type $i$ in $\pi^1_{j-1}$, and let $p$ and $d$ respectively denote its processing time and due date. By induction, we know that $p = h_{i,j-1}(e_{1},\dots,e_i-1,\dots,e_{nu_w})$, and since $\mathcal{J}_j$ is ordered according to the EDD rule, we also know that $d \leq d_j$. We now consider both cases of the recursion.

Suppose $P^1_{j-1} + p_j > d_j$. If $p \leq p_j$, then $P_j=P^2_{j-1}$ by the above recursion, and so in this case we construct $\pi_j$ by setting $\pi_j(J_\ell)=\pi^2_{j-1}(J_\ell)$ for all $\ell=1,\ldots,j-1$, and $\pi_j(J_j)=j$. Clearly, $\pi_j$ is of category $(e_1,\ldots,e_{nu_w})$, the total processing time of early jobs in $\pi_j$ equals $P_j$, and each heap $H_{i_0,j}(e_1,\ldots,e_{nu_w})$ contains the correct values corresponding to $\pi_j$. If $p > p_j$, then $P_j=P^2_{j-1}+p-p_j$. In this case, $\pi_j$ schedules all early jobs in $\pi^2_{j-1}$ apart from $J$ first, maintaining their order in $\pi^2_{j-1}$, then it schedules $J_j$, and then all remaining jobs (the tardy jobs in $\pi^2_{j-1}$ and $J$) are scheduled in an arbitrary order. As $p_j \leq p$ and $d_j \geq d$, we know that $J_j$ is early in $\pi_j$, and so $\pi_j$ is also of category $(e_1,\ldots,e_{nu_w})$. Thus, the total processing time of early jobs in $\pi_j$ equals $P_j$, and each heap $H_{i_0,j}(e_1,\ldots,e_{nu_w})$ contains the correct values corresponding to $\pi_j$ in this case as well.

Suppose $P^1_{j-1} + p_j \leq d_j$. If $P_j=P^2_{j-1}+\min\{0,p_j-p\}$, then we construct $\pi_j$ as above. Otherwise, $P_j=P^1_{j-1}+p_j$. In this case we construct $\pi_j$ from $\pi^1_{j-1}$ by scheduling $J_j$ immediately after all early jobs in $\pi^1_{j-1}$, followed by all tardy jobs in $\pi^1_{j-1}$. Clearly all early jobs in $\pi^1_{j-1}$ are also early in $\pi_j$, and as $P^1_{j-1} + p_j \leq d_j$, so is $J_j$. Thus, $\pi_j$ satisfies the requirement of the lemma.
\end{proof}

\begin{lemma}
\label{lemma:w3}
The schedule corresponding to each entry $P_j(e_1,\ldots,e_{nu_w}) \neq \infty$ in Lemma~\ref{lemma:w2} dominates all schedules for $\mathcal{J}_j$ in its category.
\end{lemma}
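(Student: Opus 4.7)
\medskip
\noindent\textbf{Proof proposal.} The plan is to combine Lemma~\ref{lemma:w1} with Lemma~\ref{lemma:w2} in a standard dominance argument. Lemma~\ref{lemma:w2} already shows that $P_j(e_1,\ldots,e_{nu_w})$ is \emph{achieved} by some schedule of $\mathcal{J}_j$ in the category $(e_1,\ldots,e_{nu_w})$, and Lemma~\ref{lemma:w1} tells us that within a fixed category the schedule minimizing the total early processing time dominates all the others. Thus, it suffices to show that $P_j(e_1,\ldots,e_{nu_w})$ is in fact the \emph{minimum} total processing time of early jobs taken over every schedule of $\mathcal{J}_j$ in that category which obeys Lemma~\ref{lemma:dp1}. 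The whole proof then reduces to establishing the inequality $P_j(e_1,\ldots,e_{nu_w})\leq P$ for every such candidate schedule $\pi$ with early processing time $P$.

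I would prove this inequality by induction on $j$. The base case $j=0$ is trivial. For the inductive step, let $\pi$ be an arbitrary schedule of $\mathcal{J}_j$ of category $(e_1,\ldots,e_{nu_w})$ obeying Lemma~\ref{lemma:dp1}, let $P$ be its total early processing time, and let $i$ be the type of $J_j$. I would split the analysis into two cases depending on whether $J_j$ is early or tardy in $\pi$.

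If $J_j$ is tardy in $\pi$, removing it yields a schedule $\pi'$ of $\mathcal{J}_{j-1}$ still in category $(e_1,\ldots,e_{nu_w})$ with the same early processing time $P$, so the induction hypothesis gives $P_{j-1}(e_1,\ldots,e_{nu_w})\leq P$. Inspecting the recursion, in both of its cases the returned value is at most
\[
P_{j-1}(e_1,\ldots,e_{nu_w}) + \min\{0,\,p_j-h_{i,j-1}(e_1,\ldots,e_{nu_w})\} \;\leq\; P_{j-1}(e_1,\ldots,e_{nu_w}) \;\leq\; P.
\]
If instead $J_j$ is early in $\pi$, removing it yields a schedule $\pi'$ of $\mathcal{J}_{j-1}$ in category $(e_1,\ldots,e_i-1,\ldots,e_{nu_w})$ with early processing time $P-p_j$, and by induction $P_{j-1}(e_1,\ldots,e_i-1,\ldots,e_{nu_w})\leq P-p_j$. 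Because $J_j$ is early in $\pi$, the EDD structure of Lemma~\ref{lemma:dp1} forces $P\leq d_j$, hence $P_{j-1}(e_1,\ldots,e_i-1,\ldots,e_{nu_w})+p_j \leq d_j$, placing us in the second branch of the recursion. In that branch $P_j(e_1,\ldots,e_{nu_w})\leq P_{j-1}(e_1,\ldots,e_i-1,\ldots,e_{nu_w})+p_j \leq P$, as desired.

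The delicate point I expect to need the most care is the second case: one must verify that whenever $J_j$ is early in some schedule of the claimed category, the inductive bound on $P_{j-1}(e_1,\ldots,e_i-1,\ldots,e_{nu_w})$ together with $P\leq d_j$ actually triggers the second branch of the recursion, so that the term $P_{j-1}(e_1,\ldots,e_i-1,\ldots,e_{nu_w})+p_j$ really appears in the minimum. Once that is in place, combining the two-case inequality with Lemma~\ref{lemma:w2} shows the schedule produced by the algorithm has the minimum possible early processing time in its category, and Lemma~\ref{lemma:w1} finally upgrades this to the dominance statement.
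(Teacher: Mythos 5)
Your proposal is correct and follows essentially the same route as the paper's own proof: reduce, via Lemma~\ref{lemma:w1}, to showing that $P_j(e_1,\ldots,e_{nu_w})$ is the minimum early processing time in its category, then induct on $j$ with a case split on whether $J_j$ is early or tardy in the competitor schedule. In fact you are slightly more careful than the paper in the early case, where you verify via $P\leq d_j$ that the second branch of the recursion is actually triggered so that the term $P_{j-1}(e_1,\ldots,e_i-1,\ldots,e_{nu_w})+p_j$ genuinely appears in the minimum.
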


\begin{proof}
According to Lemma~\ref{lemma:w1}, it is enough to show that the schedule corresponding to each entry $P_j(e_1,\ldots,e_{nu_w}) \neq \infty$ has minimum total processing time of early jobs among all schedules for $\mathcal{J}_j$ of the same category. We prove this by induction on $j$. The base case $j=0$ is trivial, so we assume that $j >0$ and that the lemma holds for $j-1$. Let $(e_1,\ldots,e_{nu_w})$ be some category, and let $i$ be the type of job $J_j$. Furthermore, let $P_j$, $P^1_{j-1}$ and $P^2_{j-1}$ be shorthand notation for $P_j(e_1,\dots,e_{nu_w})$, $P_{j-1}(e_1,\dots,e_i-1,\dots,e_{nu_w})$ and $P_{j-1}(e_1,\dots,e_{nu_w})$, respectively, and $\pi^*_j$ denote a schedule for $\mathcal{J}_j$ of category $(e_1,\ldots,e_{nu_w})$ with total processing time of early jobs $P^*_j$ which is minimal among all schedules for $\mathcal{J}_j$ in its category. To complete the proof we argue that $P_j \leq P^*_j$.

Let $\pi^*_{j-1}$ be the schedule for $\mathcal{J}_{j-1}$ obtained by removing $J_j$ from $\pi^*_j$ and maintaining the order among all remaining jobs, and let $P^*_{j-1}$ denote the total processing time of early jobs in this schedule. Suppose first that $J_j$ is tardy in $\pi^*_j$. Then $\pi^*_{j-1}$ is of category $(e_1,\dots,e_{nu_w})$, and so $P^2_{j-1} \leq P^*_{j-1}=P^*_j$ by our inductive hypothesis on $P^2_{j-1}$. Since $P_j \leq P^2_{j-1}$ holds in the above recursion, we have $P_j \leq P^*_j$. If $J_j$ is early in $\pi^*_j$, then $P^*_{j-1}$ is of category $(e_1,\dots,e_i-1,\dots,e_{nu_w})$, and so $P^1_{j-1} \leq P^*_{j-1}$. By the recursive formula we have $P_j \leq P^1_{j-1} + p_j$, and so $P_j \leq P^1_{j-1} + p_j \leq P^*_{j-1} + p_j = P^*_j$.
\end{proof}

\begin{lemma}
\label{lemma:w43}
If $P_j(e_1,\ldots,e_{nu_w}) = \infty$ then there is no schedule of category $(e_1,\ldots,e_{nu_w})$ for the job set $\mathcal{J}_j$.
\end{lemma}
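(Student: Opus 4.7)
The plan is to prove the contrapositive by induction on $j$: if there exists a schedule $\pi_j$ for $\mathcal{J}_j$ of category $(e_1,\ldots,e_{nu_w})$, then $P_j(e_1,\ldots,e_{nu_w}) < \infty$. The base case $j=0$ is immediate since the empty schedule is the unique schedule for $\mathcal{J}_0$ and only the entry for the all-zero category is set to $0$, with all others set to $\infty$; but no other category admits a schedule for the empty job set.

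For the inductive step, let $i$ denote the type of $J_j$, and let $\pi_{j-1}$ be the schedule for $\mathcal{J}_{j-1}$ obtained from $\pi_j$ by deleting $J_j$ (preserving the relative order of the remaining jobs). I split on whether $J_j$ is tardy or early in $\pi_j$. If $J_j$ is tardy, then $\pi_{j-1}$ witnesses category $(e_1,\ldots,e_{nu_w})$ for $\mathcal{J}_{j-1}$, so by the inductive hypothesis $P_{j-1}(e_1,\ldots,e_{nu_w}) < \infty$. Both branches of the recursion in the algorithm include the term $P_{j-1}(e_1,\ldots,e_{nu_w}) + \min\{0,p_j-h_{i,j-1}(e_1,\ldots,e_{nu_w})\}$ as a candidate value, and this candidate is finite, hence $P_j(e_1,\ldots,e_{nu_w}) < \infty$.

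If $J_j$ is early in $\pi_j$, then $\pi_{j-1}$ has category $(e_1,\ldots,e_i-1,\ldots,e_{nu_w})$, so the inductive hypothesis yields $P_{j-1}(e_1,\ldots,e_i-1,\ldots,e_{nu_w}) < \infty$. The subtlety here — and what I expect to be the main obstacle — is to ensure we actually land in the second branch of the recursion, which requires the inequality $P_{j-1}(e_1,\ldots,e_i-1,\ldots,e_{nu_w}) + p_j \leq d_j$. To verify this, note that since the jobs are indexed according to the EDD rule, $J_j$ has the largest due date in $\mathcal{J}_j$, so in the EDD ordering of early jobs prescribed by Lemma~\ref{lemma:dp1}, $J_j$ is processed last among the early jobs of $\pi_j$. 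Hence its completion time equals the total processing time of all early jobs of $\pi_j$, which is at most $d_j$ by feasibility. Subtracting $p_j$, the total early processing time of $\pi_{j-1}$ is at most $d_j - p_j$, and by Lemma~\ref{lemma:w3} the quantity $P_{j-1}(e_1,\ldots,e_i-1,\ldots,e_{nu_w})$ is a lower bound on this value. The required inequality follows, placing us in the second branch of the recursion, so $P_j(e_1,\ldots,e_{nu_w}) \leq P_{j-1}(e_1,\ldots,e_i-1,\ldots,e_{nu_w}) + p_j < \infty$, completing the induction.
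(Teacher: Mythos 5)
Your proof is correct and follows essentially the same route as the paper's: the paper phrases it as a minimal-counterexample contradiction rather than a contrapositive induction, but both arguments reduce to the same case split on whether $J_j$ is tardy or early in the witnessing schedule, and both invoke the minimality of $P_{j-1}(e_1,\ldots,e_i-1,\ldots,e_{nu_w})$ (Lemma~\ref{lemma:w3}) to conclude that the second branch of the recursion fires. If anything, your explicit verification of the inequality $P_{j-1}(e_1,\ldots,e_i-1,\ldots,e_{nu_w})+p_j\leq d_j$ is cleaner than the paper's closing sentence, which leaves that step implicit.
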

\begin{proof}
Suppose that the lemma is false. Let $P_j(e_1,\ldots,e_{nu_w})$ be an entry which is a counter example with minimum $j$, and let $\pi^*_j$ be a schedule for $\mathcal{J}_j$ of category $(e_1,\ldots,e_{nu_w})$. Clearly $j >0$, and by the minimality of $j$ there is no schedule for $\mathcal{J}_{j-1}$ of the same category. Thus, $J_j$ must be early in $\pi^*_j$. Let $\pi^*_{j-1}$ be the schedule for $\mathcal{J}_{j-1}$ that is obtained from $\pi^*_j$ by omitting $J_j$. Then if $J_j$ is of type $i$, the schedule $\pi^*_{j-1}$ is of category $(e_1,\ldots,e_i-1,\dots,e_{nu_w})$. By minimality of $j$, we have $P_{j-1}(e_1,\ldots,e_i-1,\dots,e_{nu_w}) \neq \infty$, and so by Lemma~\ref{lemma:w2} and~\ref{lemma:w3} our algorithm computes a schedule $\pi_{j-1}$ for $\mathcal{J}_{j-1}$ of category $(e_1,\ldots,e_i-1,\dots,e_{nu_w})$ which dominates $\pi^*_{j-1}$. But this is a contradiction since $\pi^*_{j-1}$ can be extended to a schedule with a better objective function than any extension of $\pi_{j-1}$.
\end{proof}

Note that running time of our algorithm is dominated by the dynamic program implementation. There are $O(n^{nu_w+1})$ different $P_j(e_1,e_2,\ldots,e_{nu_w})$ entries to compute. Each entry requires $O(\lg n)$ time, assuming we reuse the heaps for $\mathcal{J}_{j-1}$ of the same category. Thus, the total running time of our algorithm can be bounded by $O(n^{nu_w+1}\lg n)$, and the first part of Theorem~\ref{theorem:poly} holds.

\section{A polynomial-time algorithm for constant \boldmath{$nu_{p}$}}
\label{section:p}

In this section we show how to modify the algorithm of Section~\ref{section:w} to handle parameter $nu_p$. Due to space constraints, we only give a brief sketch.
In the case of parameter $nu_p$, we have $nu_p$ types of jobs, and jobs of the same type have the same processing time. As usual, we start by renumbering the jobs according to the EDD rule, and so we can assume $d_{1}\leq d_{2}\leq \ldots\leq d_{n}$. The eliminating property we use in this case is as follows (proof omitted):
\begin{lemma}
\label{lemma:p1}%
Let $\pi_j^1$ and $\pi_j^2$ be two schedules of the same category for $\mathcal{J}_j$, and let $W_1$ and~$W_2$ denote the total weight of the early jobs in $\pi_j^1$ and $\pi_j^2$, respectively. If $W_1 \geq W_2$, then $\pi_j^1$ dominates $\pi_j^2$.
\end{lemma}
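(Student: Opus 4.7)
The plan is to argue that any optimal extension of $\pi_j^2$ can be mimicked by an extension of $\pi_j^1$ whose objective value is no worse, by exploiting the fact that two schedules of the same category share the same multiset of early-job processing times.

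First, I would fix the notion of category for this section: a schedule is of category $(e_1,\ldots,e_{nu_p})$ if it has exactly $e_i$ early jobs with the $i$-th distinct processing time, for each $i=1,\ldots,nu_p$. The key consequence is that $\pi_j^1$ and $\pi_j^2$, being in the same category, have the \emph{same} total early-job processing time, call it $P$, because this total is completely determined by $(e_1,\ldots,e_{nu_p})$. In particular, under Lemma~\ref{lemma:dp1} the completion time of the last early job of $\mathcal{J}_j$ is $P$ in both schedules.

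Next, take an optimal extension $\pi_n^2$ of $\pi_j^2$ (which we may assume follows the EDD-early-first structure of Lemma~\ref{lemma:dp1}). I would construct a companion extension $\pi_n^1$ of $\pi_j^1$ as follows: for each job $J_{j'}$ with $j' > j$, mark it as early in $\pi_n^1$ if and only if it is early in $\pi_n^2$, then place all early jobs in EDD order and the tardy ones arbitrarily at the end. Feasibility is the main thing to verify. Since the prefix completion time at the boundary between $\mathcal{J}_j$ and the new jobs is $P$ in both schedules, and since the new early jobs are scheduled in EDD order in both, each new early job $J_{j'}$ has exactly the same completion time in $\pi_n^1$ as in $\pi_n^2$, hence remains feasible (on time). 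Thus $\pi_n^1$ is a legitimate extension of $\pi_j^1$.

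Finally I would compare objectives. The tardy sets of $\pi_n^1$ and $\pi_n^2$ agree on $\mathcal{J} \setminus \mathcal{J}_j$ by construction, so their objective values differ only through the contribution of $\mathcal{J}_j$. That contribution equals $\sum_{J_\ell \in \mathcal{J}_j} w_\ell - W_t$ in $\pi_n^t$ for $t\in\{1,2\}$, so the hypothesis $W_1 \geq W_2$ gives $\Sigma w_\ell U_\ell(\pi_n^1) \leq \Sigma w_\ell U_\ell(\pi_n^2)$. Since $\pi_n^2$ is an optimal extension of $\pi_j^2$, we conclude that the optimal extension of $\pi_j^1$ has objective value no larger than that of $\pi_j^2$, which is exactly the definition of dominance. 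The only subtle point — and the one I would be most careful about — is to confirm that the EDD-extension step truly preserves feasibility of the new early jobs; everything hinges on the identity of the two early-job total processing times, which in turn is what the shared category provides.
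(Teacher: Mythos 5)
Your proof is correct and follows essentially the argument the paper intends (the paper omits the proof, pointing to the analogue, Lemma~\ref{lemma:w1}): you extend an optimal completion of $\pi_j^2$ to a matching feasible extension of $\pi_j^1$ and compare objectives. You correctly identify the crux specific to the $nu_p$ case, namely that the category $(e_1,\ldots,e_{nu_p})$ fixes the total early processing time $\sum_i p_i e_i$ exactly, so the mimicking extension preserves all completion times; if anything, your single global extension is cleaner than the one-step interchange sketched for Lemma~\ref{lemma:w1}.
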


Accordingly, we compute for each $j =1,\ldots,n$ and category $(e_{1},\ldots,e_{nu_p})$ the value $W_{j}(e_{1},\ldots,e_{nu_p})$, which represents the maximum total weight of the early jobs among all schedules of category $(e_1,\ldots,e_{nu_p})$ for $\mathcal{J}_j$. For this, we maintain $nu_p$ heaps for each entry, where heap $H_{i,j}(e_1,\ldots,e_{nu_p})$ contains the weights of $e_i$ early jobs of type $i$. These weights will be kept as large as possible throughout the computation, and $h_{i,j}(e_1,\ldots,e_{nu_p})$ denotes the minimum value in the heap. The two cases of the recursion are very similar to the previous section:\\
\begin{itemize}
\item \underline{$\sum_{i=1}^{nu_p} p_{i}e_{i} > d_{j}$}: In this case, we compute
$$
W_j(e_1,\ldots,e_{nu_p}) = W_{j-1}(e_1,\ldots,e_{nu_p})+\max\{0,w_j-h_{i,j-1}(e_1,\ldots,e_{nu_p})\}.
$$

\item \underline{$\sum_{i=1}^{nu_p} p_{i}e_{i} \leq d_{j}$}: In this case, we set
$$
W_j(e_1,\ldots,e_{nu_p})= \max
\begin{cases}
W_{j-1}(e_1,\ldots,e_{nu_p})+\max\{0,w_j-h_{i,j-1}(e_1,\ldots,e_{nu_p})\}, \\
W_{j-1}(e_1,\ldots,e_i-1,\ldots,e_{nu_p})+w_{j}.
\end{cases}
$$
\end{itemize}

The heaps in both cases are updated in the natural manner. Moreover, the base cases are identical to the previous section, except that we use $-\infty$ instead of $\infty$ to indicate that no schedule is possible. The proof of correctness of and runtime analysis of our algorithm can be done in the same manner as for the case of parameter $nu_w$, which leads to the proof of the second part of Theorem~\ref{theorem:poly}. 

\section{Summary and future work}

In this paper we analyze the tractability of the classical classical \emph{NP%
}-hard single machine scheduling problem with the objective of minimizing
the total weighted number of tardy jobs, when a subset of its parameters is of a limited size.
We focus on different combinations of the following three parameters. The first is the number of different due dates;
the second is the number of different weights; and the third is the number
of different processing times. We show that the problem belongs to the \emph{%
FPT} set for any combination of two parameters. We also explain why the
problem is $W[1]$-hard for the first parameter and prove that the problem
is solvable in polynomial time when either one of the two last parameters is constant.

The most important questions that were left open are
to determine whether the studied scheduling problem belongs to the \emph{FPT}
set with respect to ($i$)\ the number of different weights; and ($ii$) the
number of different processing time. Moreover, since there are only few
available results on the parameterized complexity of scheduling problems,
future work may focus on other classical hard scheduling problems, such that
of minimizing the total tardiness on a single machine, or minimizing the
makespan in a three machine flow shop system. Moreover, an extension of the studied problem for the case of arbitrary release dates and/or parallel machine setting may also be a possible direction for future research.

\bibliographystyle{plain}
\bibliography{bib}

\end{document}